\def\Z{\mathbb Z}
\def\C{\mathbb C}
\def\1{{\bf{1}}}
\def\footnoterule{\kern 1mm \hrule width 7cm \kern 2.2mm}%
 \newtheorem{thm}{Theorem}[section]
 \newtheorem{prp}[thm]{Proposition}
 \newtheorem{lem}[thm]{Lemma}
 \newtheorem{dfn}[thm]{Definition}
 \newtheorem{eg}[thm]{Example}
\newcommand{\bea}{\begin{eqnarray}}
\newcommand{\eea}{\end{eqnarray}}
\newcommand{\be}{\begin{equation}}
\newcommand{\ee}{\end{equation}}
\begin{document}

\title{Algebraic formulas for the structure constants in symmetric functions}
\author{\  \ Na Wang\dag, Ke Wu\ddag\\
\dag\small School of mathematics and statistics, Henan University, Kaifeng, 475001, China.\\
\ddag\small School of Mathematical Sciences, Capital Normal University, Beijing 100048, China.
}

\date{}
\maketitle

\begin{abstract}
Littlewood-Richardson rule gives the decomposition formula for the multiplication of two Schur functions, while the decomposition formula for the multiplication of two Hall-Littlewood functions or two universal characters is also given by the combinatorial method.
In this paper, using the vertex operator realizations of these symmetric functions, we construct the algebraic forms of these decomposition formulas.
\end{abstract}
\noindent
{\bf Keywords: }{Schur function, Hall-Littlewood function, universal character, vertex operator.}

\section{Introduction}\label{sec1}
Symmetric functions have played an important role in mathematics for a long time\cite{Mac,FH,stan,weyl}, ranging from combinatorics, representation theory to enumerative geometry.
Symmetric functions also appear in mathematical physics, especially in integrable models. Kyoto school use Schur functions in a remarkable way to understand the KP and KdV hierarchies\cite{MJD}.
N.V.Tsilevich and P. Su\l kowski give the realizations of phase model and $q$-boson model in the algebra of Schur functions and Hall-Littlewood functions respectively\cite{NVT,PS}. We construct the realization of two-site generalized phase model in the algebra of universal characters (a generalization of Schur functions)\cite{wang1} and find that the generating functions of weighted plane partitions in finite boxes can be obtained from the vertex operators which are raising operators of Hall-Littlewood functions\cite{na1}.

 Schur functions, Hall-Littlewood functions and universal characters can be realized by vertex operators which are defined with the help of infinite dimensional Heisenberg algebras\cite{MJD,Jing}.  Ones can find a general treatment of vertex operators and their connections with affine Lie algebras and the Monster group in \cite{frenkel}.

In the present work, we consider the formulas
\begin{eqnarray*}
S_{\mu}({\bf x})S_{\nu}({\bf x})&=&\sum_\lambda C_{\mu\nu}^\lambda S_{\lambda}({\bf x}),\\
Q_{\mu}({\bf x})Q_{\nu}({\bf x})&=&\sum_\lambda f_{\mu\nu}^\lambda Q_{\lambda}({\bf x}),\\
S_{[\xi,\eta]}({\bf x},{\bf y})S_{[\tau,\mu]}({\bf x},{\bf y})&=&\sum_{\lambda,\mu} M_{[\xi,\eta],[\tau,\nu]}^{[\lambda,\mu]}S_{[\lambda,\mu]}({\bf x},{\bf y}),
\end{eqnarray*} which are given by the combinatorial method, where $S_{\lambda}({\bf x}),\ Q_{\lambda}({\bf x})$ and $ S_{[\lambda,\mu]}({\bf x},{\bf y})$ denote Schur function, Hall-Littlewood function and universal character respectively. we construct the algebraic formulas to compute them. All results we obtain are based on the vertex operator realizations of these three kinds of symmetric functions.

For an integer vector $\alpha=(\alpha_1,\alpha_2,\cdots,\alpha_l)$, we define the polynomials $S_\alpha({\bf x})$ by \[
V^{+}(z_1)V^{+}(z_2)\cdots V^{+}(z_l)\cdot 1=\sum_{\alpha}S_\alpha({\bf x})z_1^{\alpha_1}z_2^{\alpha_2}\cdots z_l^{\alpha_l},\]
where
\[
V^+(z)=e^{ \sum_{n=1}^\infty x_n z^n}e^{- \sum_{n=1}^\infty\frac{1}{n}\frac{\partial}{ \partial{x_n} }z^{-n}}.
\]
When $\alpha$ is a Young diagram, then $S_\alpha({\bf x})$ is a Schur function. When $\alpha$ is not a Young diagram, then $S_\alpha({\bf x})$ can turn into a Schur function by the following formula
\[S_{(\alpha_1,\cdots,\alpha_i,\alpha_{i+1},\cdots,\alpha_l)}({\bf x})=-S_{(\alpha_1,\cdots,\alpha_{i+1}-1,\alpha_{i}+1,\cdots,\alpha_l)}({\bf x}).
\]

For a pair of Young diagram $\mu=(\mu_1,\mu_2,\cdots,\mu_l)$ and $\nu=(\nu_1,\nu_2,\cdots,\nu_{l'})$, the Schur functions $S_\mu({\bf x})$ and $S_\nu({\bf x})$, respectively,  equal the coefficients of $z_1^{\mu_1}\cdots z_l^{\mu_l}$ and $w_1^{\nu_1}\cdots w_{l'}^{\nu_{l'}}$ in the expansions of
\[V^{+}(z_1)V^{+}(z_2)\cdots V^{+}(z_l)\cdot 1,\ \quad \text{and}\ \quad V^{+}(w_1)V^{+}(w_2)\cdots V^{+}(w_{l'})\cdot 1.\]

By calculation, the multiplication
\begin{eqnarray*}
&& V^{+}(z_1)V^{+}(z_2)\cdots V^{+}(z_l)\cdot 1 \times V^{+}(w_1)V^{+}(w_2)\cdots V^{+}(w_l)\cdot 1\\
&=&\prod_{i,j}\frac{1}{1-w_j/z_i}V^{+}(z_1)V^{+}(z_2)\cdots V^{+}(z_l)V^{+}(w_1)V^{+}(w_2)\cdots V^{+}(w_l)\cdot 1,
\end{eqnarray*}
taking the coefficient of $z_1^{\mu_1}\cdots z_l^{\mu_l}w_1^{\nu_1}\cdots w_{l'}^{\nu_{l'}}$ of the both sides, we get the algebraic formula to calculate the multiplication $S_\mu({\bf x})S_\nu({\bf x})$.
With the same method, we get the the algebraic formula to calculate the multiplication of two Hall-Littlewood functions.

The paper is organized as follows. In section \ref{sect2}, we get the algebraic form of the decomposition formula of the multiplication of two Schur functions.  In section \ref{sect3}, we get the algebraic form of the decomposition formula of the multiplication of two Hall-Littlewood functions. In section \ref{sect4}, we get the the algebraic form of the decomposition formula of the multiplication of two universal characters.
\section{Schur functions}\label{sect2}
Let ${\bf x}=(x_1,x_2,\cdots)$. The operators $h_n({\bf x})$ are determined by the generated function:
\be\label{hxi}
\sum_{n=0}^\infty h_n({\bf x})z^n=e^{\xi({\bf x},z)},\quad \xi({\bf x},z)=\sum_{n=1}^\infty x_n z^n
\ee
and set $h_n({\bf x})=0$ for $n<0$. The operators $h_n({\bf x})$ can be explicitly written as
\[
h_n({\bf x})=\sum_{k_1+2k_2+\cdots nk_n=n}\frac{x_1^{k_1}x_2^{k_2}\cdots x_n^{k_n}}{k_1!k_2!\cdots k_n!}
\]
Note that if we replace $ix_i$ with the power sum $p_i=\sum_j x_j^i$, $h_n({\bf x})$ is the complete homogeneous symmetric function
\[
\sum_{i_1\leq\cdots\leq i_n} x_{i_1}x_{i_2}\cdots x_{i_n}.
\]

For Young diagram $\lambda=(\lambda_1,\lambda_2,\cdots,\lambda_l)$, Schur function $S_{\lambda}({\bf x})$ is a polynomial of variables ${\bf x}$ in $\C[{\bf x}]$ defined by the Jacobi-Trudi formula \cite{Mac}:
\be
S_{\lambda}({\bf x})=\det\left(h_{\lambda_{i}-i+j}({\bf x})\right)_{1\leq i\leq l}
\ee

Introduce the vertex operators
\begin{eqnarray}
V^{\pm}(z)=\sum_{n\in\Z}V_n^{\pm}z^n=e^{\pm \xi({\bf x},z)}e^{\mp \xi(\tilde\partial_{\bf x},z^{-1})},
\end{eqnarray}
where $\tilde\partial_{\bf x}=(\frac{\partial}{\partial{{\bf x}_1}},\frac{1}{2}\frac{\partial}{\partial{{\bf x}_2}},\cdots)$.
The operators $V_n^+$ are raising operators of the Schur function in the following sense:
\begin{eqnarray}
S_\lambda({\bf x})&=&V_{\lambda_1}^+V_{\lambda_2}^+\cdots V_{\lambda_l}^+\cdot1\\
&=&[{\bf z}^\lambda]V^{+}(z_1)V^{+}(z_2)\cdots V^{+}(z_l)\cdot 1\label{slambda}
\end{eqnarray}
where $[{\bf z}^\lambda]=[z_1^{\lambda_1}z_2^{\lambda_2}\cdots z_l^{\lambda_l}]$ means taking the coefficient of the term $z_1^{\lambda_1}z_2^{\lambda_2}\cdots z_l^{\lambda_l}$ in the expansion of $V^{+}(z_1)V^{+}(z_2)\cdots V^{+}(z_l)\cdot 1$.

\begin{dfn}
For an integer vector $\alpha=(\alpha_1,\alpha_2,\cdots,\alpha_l)$, define the polynomials $S_\alpha({\bf x})$ by \begin{eqnarray}
S_\alpha({\bf x})&=&\det\left(h_{\alpha_{i}-i+j}({\bf x})\right)_{1\leq i,j\leq l}\nonumber\\
&=&V_{\alpha_1}^+V_{\alpha_2}^+\cdots V_{\alpha_l}^+\cdot1\nonumber\\
&=&[{\bf z}^\alpha]V^{+}(z_1)V^{+}(z_2)\cdots V^{+}(z_l)\cdot 1\label{slambda}
\end{eqnarray}
where $[{\bf z}^\alpha]=[z_1^{\alpha_1}z_2^{\alpha_2}\cdots z_l^{\alpha_l}]$. We denote $S_\alpha({\bf x})$ by $S_\alpha$ for simplicity.
\end{dfn}
\begin{lem}\label{al12}
The polynomials $S_\alpha({\bf x})$ defined in (\ref{slambda}) equal zero unless $\alpha_1\geq -l+1,\alpha_2\geq -l+2,\cdots,\alpha_l\geq 0$.
\end{lem}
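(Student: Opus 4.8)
The plan is to read the conclusion off directly from the Jacobi--Trudi-type determinant in the definition of $S_\alpha({\bf x})$, using nothing more than the convention $h_n({\bf x})=0$ for $n<0$. So I would not use the vertex-operator formula at all here; the matrix formula $S_\alpha({\bf x})=\det\big(h_{\alpha_i-i+j}({\bf x})\big)_{1\le i,j\le l}$ is the right tool, and the argument is a one-line observation about a vanishing row.

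\textbf{Step 1 (set up the contrapositive).} Suppose the stated hypothesis fails, i.e. there is an index $k\in\{1,\dots,l\}$ with $\alpha_k<-l+k$. Since $\alpha$ is an integer vector, this is the same as $\alpha_k\le -l+k-1$, equivalently $\alpha_k-k+l\le -1$.

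\textbf{Step 2 (locate a zero row).} Consider the $k$-th row of the matrix $\big(h_{\alpha_i-i+j}({\bf x})\big)$, whose entries are $h_{\alpha_k-k+j}({\bf x})$ for $j=1,\dots,l$. The largest subscript occurring in this row is the one with $j=l$, namely $\alpha_k-k+l\le -1<0$; hence \emph{every} entry of the $k$-th row is some $h_m({\bf x})$ with $m<0$, and therefore equals $0$. So the $k$-th row of the matrix is identically zero, and consequently $S_\alpha({\bf x})=\det\big(h_{\alpha_i-i+j}({\bf x})\big)=0$. This is exactly the contrapositive of the claim: if $S_\alpha({\bf x})\neq 0$ then $\alpha_i\ge -l+i$ for every $i$, i.e. $\alpha_1\ge -l+1,\ \alpha_2\ge -l+2,\ \dots,\ \alpha_l\ge 0$.

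I expect no real obstacle here beyond keeping the index shifts straight: the only points that need a moment's care are the passage from the strict inequality $\alpha_k<-l+k$ to $\alpha_k-k+l\le -1$ (which uses integrality of $\alpha$), and the remark that within row $k$ it is the last column $j=l$ that carries the largest subscript, so that once that subscript is negative the whole row vanishes. One could alternatively phrase the same fact through the vertex operators $S_\alpha=V^+_{\alpha_1}\cdots V^+_{\alpha_l}\cdot 1$, but this gains nothing, so I would keep the determinant argument.
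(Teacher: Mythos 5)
Your proof is correct, but it takes a genuinely different route from the paper's. The paper works entirely on the vertex-operator side: it writes $V^{+}(z_1)\cdots V^{+}(z_l)\cdot 1=\prod_{i<j}\bigl(1-z_j/z_i\bigr)\,e^{\xi({\bf x},z_1)}\cdots e^{\xi({\bf x},z_l)}$ and observes that the only source of negative powers of $z_i$ is the prefactor, which contributes at worst $z_i^{-(l-i)}$, so the coefficient of $z^\alpha$ vanishes unless $\alpha_i\ge -l+i$. You instead read the vanishing off the Jacobi--Trudi determinant: if $\alpha_k\le -l+k-1$ then even the largest subscript $\alpha_k-k+l$ in row $k$ is negative, the whole row is zero by the convention $h_n=0$ for $n<0$, and the determinant vanishes. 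Both arguments are complete one-liners, and both are legitimate given that the paper's definition asserts the determinant and the vertex-operator expressions are equal. Your version is more elementary (it needs only the convention on $h_n$, not the normal-ordering computation), and it pins down the index bookkeeping explicitly. The paper's version has the advantage that it transfers verbatim to the Hall--Littlewood setting of Lemma \ref{qal12}, where $Q_\alpha$ is not given by a single determinant and the row-vanishing trick has no direct analogue; that is presumably why the authors chose it.
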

\begin{proof}Since \be
V^{+}(z_1)V^{+}(z_2)\cdots V^{+}(z_l)\cdot 1=\prod_{i<j}(1-\frac{z_j}{z_i})e^{\xi({\bf x},z_1)}e^{\xi({\bf x},z_2)}\cdots e^{\xi({\bf x},z_l)},
\ee
taking the coefficient of $z^\alpha$, we get the result.
\end{proof}
\begin{lem}
If $\alpha_j=\alpha_i+(j-i)$, the polynomial $S_\alpha({\bf x})$ is equal to $0$.
\end{lem}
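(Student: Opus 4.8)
The plan is to leverage Lemma \ref{al12} together with the antisymmetry relation for $S_\alpha$ stated in the introduction, namely
\[
S_{(\alpha_1,\dots,\alpha_i,\alpha_{i+1},\dots,\alpha_l)} = -S_{(\alpha_1,\dots,\alpha_{i+1}-1,\alpha_i+1,\dots,\alpha_l)}.
\]
First I would reduce to the case of adjacent indices $j = i+1$: if $\alpha_j = \alpha_i + (j-i)$ for some $i < j$ with $j - i \geq 2$, I would use the antisymmetry relation repeatedly to move the entries into adjacent position while tracking how the equality $\alpha_j = \alpha_i + (j - i)$ transforms, so that it becomes an equality of the form $\beta_{k+1} = \beta_k + 1$ for the resulting vector $\beta$ (possibly up to an overall sign, which is irrelevant for the claim $S_\alpha = 0$). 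Alternatively, and more cleanly, one can observe that the determinant $\det(h_{\alpha_i - i + j})$ has a built-in invariance: the shifted vector $\alpha - \rho$ with $\rho = (0,-1,-2,\dots)$, i.e. the sequence $a_i := \alpha_i - i$, is what matters, and the hypothesis says $a_i = a_j$; the determinant picks up a sign under a permutation of the rows together with the corresponding shift, so WLOG $j = i+1$, $a_i = a_{i+1}$, i.e. $\alpha_{i+1} = \alpha_i + 1$.

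For the adjacent case, I would apply the antisymmetry relation to the pair $(\alpha_i, \alpha_{i+1})$ directly: it gives
\[
S_{(\dots,\alpha_i,\alpha_{i+1},\dots)} = -S_{(\dots,\alpha_{i+1}-1,\alpha_i+1,\dots)}.
\]
Substituting $\alpha_{i+1} = \alpha_i + 1$, the right-hand side has its $i$-th entry equal to $\alpha_{i+1} - 1 = \alpha_i$ and its $(i+1)$-th entry equal to $\alpha_i + 1 = \alpha_{i+1}$, so it is $-S_\alpha$ again. Hence $S_\alpha = -S_\alpha$, forcing $S_\alpha = 0$. As an equivalent route avoiding even the reduction step, one can argue directly from the determinant: two rows of $\det(h_{\alpha_r - r + s})_{r,s}$ coincide when $\alpha_i - i = \alpha_j - j$, i.e. exactly when $\alpha_j = \alpha_i + (j - i)$, and a determinant with two equal rows vanishes.

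The main obstacle — really the only point needing care — is making the reduction to adjacent indices rigorous when one works through the vertex-operator/antisymmetry picture rather than the determinant: one must verify that conjugating the transposition of the two offending entries by the intermediate $S_{\pm 1}$-shifts does not accidentally produce a vector outside the range where these manipulations are valid, and that the signs bookkeep correctly. Since the statement to be proved is simply $S_\alpha = 0$, all sign ambiguities are harmless, so I expect this to be a short verification; the cleanest exposition is probably to invoke the Jacobi–Trudi determinant form from the Definition and note the two equal rows, then remark that this is consistent with the vertex operator relation.
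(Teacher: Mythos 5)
Your proposal is correct, and the route you yourself single out as the cleanest --- observing that $\alpha_j=\alpha_i+(j-i)$ is equivalent to $\alpha_i-i=\alpha_j-j$, so rows $i$ and $j$ of the Jacobi--Trudi determinant $\det\bigl(h_{\alpha_r-r+s}({\bf x})\bigr)$ coincide and the determinant vanishes --- is exactly the paper's proof. The additional antisymmetry/vertex-operator argument you sketch is a valid alternative but is not needed.
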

\begin{proof}
This holds since the $i$th line and the $j$th line of $\det\left(h_{\alpha_{i}-i+j}({\bf x})\right)_{1\leq i,j\leq l}$ are the same if $\alpha_j=\alpha_i+(j-i)$.
\end{proof}
Note that if $\alpha$ is a Young diagram, i.e., $\alpha_1\geq\alpha_2\geq\cdots\geq\alpha_l\geq 0$, the polynomial $S_\alpha({\bf x})$ is a Schur function corresponding to the Young diagram $\alpha$. If $\alpha$ is not a Young diagram, the polynomial $S_\alpha({\bf x})$ can be turned into a Schur function by the following formula.
\begin{prp}\label{al1221}
For $1\leq i\leq l-1$, the following equation holds
\be\label{eqal1221}
S_{(\alpha_1,\cdots,\alpha_i,\alpha_{i+1},\cdots,\alpha_l)}=-S_{(\alpha_1,\cdots,\alpha_{i+1}-1,\alpha_{i}+1,\cdots,\alpha_l)}.
\ee
\end{prp}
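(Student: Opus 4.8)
The plan is to read (\ref{eqal1221}) off directly from the determinant $S_\alpha=\det(h_{\alpha_p-p+q}({\bf x}))_{1\leq p,q\leq l}$ appearing in the definition, using nothing beyond the fact that interchanging two adjacent rows of a determinant reverses its sign. Set $\beta=(\alpha_1,\cdots,\alpha_{i-1},\alpha_{i+1}-1,\alpha_i+1,\alpha_{i+2},\cdots,\alpha_l)$, the index vector on the right-hand side of (\ref{eqal1221}). I would show that the matrix $(h_{\beta_p-p+q})_{1\leq p,q\leq l}$ is obtained from $(h_{\alpha_p-p+q})_{1\leq p,q\leq l}$ by transposing rows $i$ and $i+1$. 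Indeed, for $p\notin\{i,i+1\}$ one has $\beta_p=\alpha_p$, so those rows agree; the $(i+1)$st row of $(h_{\alpha_p-p+q})$ has $q$th entry $h_{\alpha_{i+1}-(i+1)+q}=h_{(\alpha_{i+1}-1)-i+q}$, which is exactly the $q$th entry of the $i$th row of $(h_{\beta_p-p+q})$ since $\beta_i=\alpha_{i+1}-1$; and the $i$th row of $(h_{\alpha_p-p+q})$ has $q$th entry $h_{\alpha_i-i+q}=h_{(\alpha_i+1)-(i+1)+q}$, the $q$th entry of the $(i+1)$st row of $(h_{\beta_p-p+q})$ since $\beta_{i+1}=\alpha_i+1$. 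Hence the two determinants differ by the single factor $-1$, i.e.\ $S_\beta=-S_\alpha$, which is (\ref{eqal1221}).

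A second proof, closer to the vertex-operator viewpoint of the paper, would start from the product expansion $V^{+}(z_1)\cdots V^{+}(z_l)\cdot 1=\prod_{p<q}(1-z_q/z_p)\,e^{\xi({\bf x},z_1)}\cdots e^{\xi({\bf x},z_l)}$ obtained in the proof of Lemma \ref{al12}. Interchanging the consecutive variables $z_i$ and $z_{i+1}$ leaves the exponential factor unchanged and multiplies the prefactor by $(1-z_i/z_{i+1})/(1-z_{i+1}/z_i)=-z_i/z_{i+1}$, whereas on the operator side it only exchanges $V^{+}(z_i)$ and $V^{+}(z_{i+1})$; comparing the coefficient of $z_1^{\alpha_1}\cdots z_l^{\alpha_l}$ on the two sides of the resulting identity reproduces (\ref{eqal1221}). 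Equivalently, one can expand the operator relation $z\,V^{+}(z)V^{+}(w)+w\,V^{+}(w)V^{+}(z)=0$ into modes to get $V_a^+V_b^+=-V_{b-1}^+V_{a+1}^+$, and apply this to the adjacent pair $V_{\alpha_i}^+V_{\alpha_{i+1}}^+$ inside $S_\alpha=V_{\alpha_1}^+\cdots V_{\alpha_l}^+\cdot 1$.

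There is no real obstacle in this statement: the whole content is a single sign together with careful index bookkeeping — keeping track that in passing from $\alpha$ to $\beta$ the entry $\alpha_{i+1}$ becomes $\alpha_{i+1}-1$ in slot $i$ while $\alpha_i$ becomes $\alpha_i+1$ in slot $i+1$, and that exactly one sign (not $(-1)^2$) is produced. As a consistency check, specializing $\alpha_{i+1}=\alpha_i+1$ makes $\beta=\alpha$, so (\ref{eqal1221}) becomes $S_\alpha=-S_\alpha=0$, in agreement with the preceding lemma.
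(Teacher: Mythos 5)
Your proposal is correct, and in fact contains two valid arguments. Your second argument is essentially the paper's own proof: the paper proves the statement by writing $V^{+}(z_1)\cdots V^{+}(z_i)V^{+}(z_{i+1})\cdots V^{+}(z_l)\cdot 1=-\frac{z_{i+1}}{z_i}V^{+}(z_1)\cdots V^{+}(z_{i+1})V^{+}(z_i)\cdots V^{+}(z_l)\cdot 1$ (the exchange relation that follows from Lemma \ref{vv}) and extracting the coefficient of $z^{\alpha}$, which is exactly your mode identity $V_a^+V_b^+=-V_{b-1}^+V_{a+1}^+$ applied to the adjacent pair. Your first argument, via the Jacobi--Trudi determinant, is a genuinely different and more elementary route: your index bookkeeping is right (row $i$ of the $\beta$-matrix has entries $h_{(\alpha_{i+1}-1)-i+q}=h_{\alpha_{i+1}-(i+1)+q}$ and row $i+1$ has entries $h_{(\alpha_i+1)-(i+1)+q}=h_{\alpha_i-i+q}$, so the two matrices differ by one adjacent row transposition), and it needs nothing beyond antisymmetry of the determinant. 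The determinant proof is in the same spirit as the paper's proof of the preceding lemma (equal rows force vanishing when $\alpha_j=\alpha_i+(j-i)$), while the vertex-operator proof has the advantage of generalizing directly to the Hall--Littlewood analogue in Section \ref{sect3}, where no simple determinant formula is available and the exchange relation acquires the $t$-dependent factor $\frac{t-R}{1-tR}$. Your consistency check at $\alpha_{i+1}=\alpha_i+1$ is also a nice touch confirming the sign.
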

\begin{proof}Since
\begin{eqnarray*}
&&V^{+}(z_1)\cdots V^{+}(z_i)V^+(z_{i+1})\cdots V^{+}(z_l)\cdot 1\\
&=&-\frac{z_{i+1}}{z_i}V^{+}(z_1)\cdots V^{+}(z_{i+1})V^+(z_{i})\cdots V^{+}(z_l)\cdot 1,\end{eqnarray*}
taking the coefficient of $z^\alpha$, we get the result.\end{proof}
For example, $S_{(2,3)}({\bf x})=0$, and $S_{(2,4)}({\bf x})=-S_{(3,3)}({\bf x})$.

\begin{lem}\label{vv}
The vertex operators satisfy
\be\label{zwzw}
\frac{1}{1-w/z}V^+(z)V^+(w)=\frac{1}{1-z/w}V^+(w)V^+(z)
\ee
\begin{proof}
Since
\[
e^{-\xi(\partial_{\bf x},z^{-1})}e^{\xi({\bf x},w)}=(1-\frac{w}{z})e^{\xi({\bf x},w)}e^{-\xi(\partial_{\bf x},z^{-1})}
\]
we have that both sides of equation (\ref{zwzw}) equal
\[
:V^+(z)V^+(w):
\]
where $::$ denotes the normal order defined as usual.
\end{proof}
\end{lem}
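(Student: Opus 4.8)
The plan is to show that both sides of (\ref{zwzw}) coincide with the single normal-ordered expression $:V^+(z)V^+(w):$, where the normal ordering moves every $\partial/\partial x_n$ to the right of every $x_m$.

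First I would record the basic commutator. Writing $V^+(z)=e^{\xi({\bf x},z)}e^{-\xi(\tilde\partial_{\bf x},z^{-1})}$, the only non-commuting pieces entering $V^+(z)V^+(w)$ are the ``creation'' factor $e^{\xi({\bf x},w)}$ of $V^+(w)$ and the ``annihilation'' factor $e^{-\xi(\tilde\partial_{\bf x},z^{-1})}$ of $V^+(z)$. From the Heisenberg relations $[\partial/\partial x_n,x_m]=\delta_{nm}$ one computes
\[
[\xi(\tilde\partial_{\bf x},z^{-1}),\,\xi({\bf x},w)]=\sum_{n\geq 1}\frac{1}{n}\Big(\frac{w}{z}\Big)^{n}=-\log\Big(1-\frac{w}{z}\Big),
\]
which is a scalar, hence central. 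Therefore the Baker–Campbell–Hausdorff identity in the form $e^{A}e^{B}=e^{[A,B]}e^{B}e^{A}$ (valid when $[A,B]$ is central) applies with $A=-\xi(\tilde\partial_{\bf x},z^{-1})$, $B=\xi({\bf x},w)$ and yields exactly the relation quoted in the statement,
\[
e^{-\xi(\tilde\partial_{\bf x},z^{-1})}e^{\xi({\bf x},w)}=\Big(1-\frac{w}{z}\Big)\,e^{\xi({\bf x},w)}e^{-\xi(\tilde\partial_{\bf x},z^{-1})}.
\]

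Next I would insert this into $V^+(z)V^+(w)=e^{\xi({\bf x},z)}\big(e^{-\xi(\tilde\partial_{\bf x},z^{-1})}e^{\xi({\bf x},w)}\big)e^{-\xi(\tilde\partial_{\bf x},w^{-1})}$ to get $V^+(z)V^+(w)=(1-w/z):V^+(z)V^+(w):$, and repeat with $z$ and $w$ interchanged to get $V^+(w)V^+(z)=(1-z/w):V^+(w)V^+(z):$. Since the two creation factors commute with each other, and likewise the two annihilation factors, the normal-ordered products agree, $:V^+(z)V^+(w):\,=\,:V^+(w)V^+(z):$. Dividing the first identity by $(1-w/z)$ and the second by $(1-z/w)$ then shows both sides of (\ref{zwzw}) equal this common expression, completing the proof.

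The one point that needs care — the step I would flag as the main (if somewhat pedantic) obstacle — is the bookkeeping of formal series: $\tfrac{1}{1-w/z}$ and $\tfrac{1}{1-z/w}$ are \emph{distinct} formal Laurent series (the expansions of the same rational function in the two regions $|w|<|z|$ and $|z|<|w|$), and the commutator computation implicitly expands $-\log(1-w/z)$ in powers of $w/z$. Thus (\ref{zwzw}) is to be read not as an equality of one fixed formal series but as the assertion that, after multiplying $V^+(z)V^+(w)$ by $\sum_{k\geq 0}(w/z)^{k}$ on one side and $V^+(w)V^+(z)$ by $\sum_{k\geq 0}(z/w)^{k}$ on the other, both collapse to the single well-defined series $:V^+(z)V^+(w):$; keeping the two expansion regions separate throughout is what makes the argument rigorous.
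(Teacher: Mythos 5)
Your proposal is correct and follows essentially the same route as the paper: both establish the commutation relation $e^{-\xi(\tilde\partial_{\bf x},z^{-1})}e^{\xi({\bf x},w)}=(1-w/z)\,e^{\xi({\bf x},w)}e^{-\xi(\tilde\partial_{\bf x},z^{-1})}$ and conclude that each side of (\ref{zwzw}) equals the common normal-ordered product $:V^+(z)V^+(w):$. You merely fill in the BCH computation of the central commutator and add a (welcome) caveat about the two expansion regions for $\tfrac{1}{1-w/z}$ and $\tfrac{1}{1-z/w}$, which the paper leaves implicit.
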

For each pair of integers $i, j$ such that $1\leq i<j\leq l$, define $R_{ij}$ by
\be
R_{ij}\cdot \alpha=(\alpha_1,\cdots,\alpha_i+1,\cdots,\alpha_j-1,\cdots,\alpha_l)
\ee
and
\be
R_{ij}\cdot S_{\alpha}({\bf x})= S_{R_{ij}\cdot\alpha}({\bf x}).
\ee

For a pair of Young diagrams $\lambda, \mu$, the notation $\lambda\cup\mu$ is the Young diagram whose parts are those of $\lambda$ and $\mu$ arranged in descending order. For example, if $\lambda=(321)$ and $\mu=(21)$, then $\lambda\cup\mu=(32211)$.

The Pieri's formula \cite{Mac} tells $S_\mu({\bf x}) h_r({\bf x})=\sum_{\lambda}S_\lambda({\bf x})$ summed over all partitions $\lambda$ such that $\lambda-\mu$ is a horizontal $r$-strip. Here we do not explain the notations, but we will give an algebraic formula for $S_\mu({\bf x}) h_r({\bf x})$.

\begin{prp}
Let Young diagram $\mu=(\mu_1,\cdots,\mu_l)$ and $\mu_{i+1}< r\leq\mu_i$ for $0\leq i\leq l$ with $\mu_{l+1}=0,\ \mu_{0}=+\infty$,
\be
S_\mu({\bf x}) h_r({\bf x})=\prod_{j=1}^i\frac{1}{1-R_{j(i+1)}}\prod_{k=i+2}^{l+1}\frac{1}{1-R_{(i+1)k}}\cdot S_{\mu\cup(r)}({\bf x}).
\ee
\end{prp}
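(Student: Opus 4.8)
The plan is to show that \emph{both} sides of the asserted identity equal the single coefficient
\[
\Bigl[z_1^{\mu_1}\cdots z_l^{\mu_l}w^{r}\Bigr]\ \prod_{p=1}^{l}\frac{1}{1-w/z_p}\ V^{+}(z_1)\cdots V^{+}(z_l)V^{+}(w)\cdot 1,
\]
each factor $\frac{1}{1-w/z_p}$ being expanded as a geometric series in $w/z_p$. For the left side: since $V^{+}(w)\cdot1=e^{\xi({\bf x},w)}=\sum_{n\ge0}h_n({\bf x})w^n$ by (\ref{hxi}), we have $h_r({\bf x})=[w^r]\,V^{+}(w)\cdot1$, while $S_\mu({\bf x})=[{\bf z}^\mu]\,V^{+}(z_1)\cdots V^{+}(z_l)\cdot1$ by the definition of $S_\alpha$. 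Multiplying these two symmetric functions and applying the merging identity displayed in the Introduction (with the second block of vertex operators a single $V^{+}(w)$) puts $S_\mu({\bf x})h_r({\bf x})$ into exactly the displayed form; because the vertex-operator series has exponents bounded below (Lemma \ref{al12}), this coefficient is a finite sum, so the expansion of $\frac{1}{1-w/z_p}$ in nonnegative powers of $w/z_p$ is the forced, legitimate one.

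For the right side, put $\gamma=\mu\cup(r)=(\mu_1,\dots,\mu_i,r,\mu_{i+1},\dots,\mu_l)$, a vector of length $l+1$; the hypothesis $\mu_{i+1}<r\le\mu_i$ is precisely what inserts $r$ in slot $i+1$. Working with $l+1$ variables and writing $F(\zeta)=V^{+}(\zeta_1)\cdots V^{+}(\zeta_{l+1})\cdot1=\sum_{\alpha}S_\alpha({\bf x})\zeta^{\alpha}$, one notes that $R_{ab}$ (for $a<b$) acts on index vectors in the manner that corresponds to multiplying $F(\zeta)$ by $\zeta_b/\zeta_a$, hence $\frac{1}{1-R_{ab}}=\sum_{m\ge0}R_{ab}^{m}$ corresponds to multiplication of $F(\zeta)$ by the geometric series in $\zeta_b/\zeta_a$. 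The point requiring care here is that $\prod\frac{1}{1-R}\cdot S_\gamma$ must be read by first expanding the entire operator as $\sum(\text{coefficient})\,R^{\mathrm{exp}}$, letting it act on the index vector $\gamma$, and only then interpreting the resulting symbols $S_\beta$ via Lemma \ref{al12} and Proposition \ref{al1221}; one may not simplify intermediate expressions, since the $R_{ab}$ do not respect the relations among the $S_\beta$. Granting this,
\[
\prod_{j=1}^i\frac{1}{1-R_{j(i+1)}}\prod_{k=i+2}^{l+1}\frac{1}{1-R_{(i+1)k}}\cdot S_\gamma=\Bigl[\zeta^{\gamma}\Bigr]\Bigl(\prod_{j=1}^i\frac{1}{1-\zeta_{i+1}/\zeta_j}\ \prod_{k=i+2}^{l+1}\frac{1}{1-\zeta_k/\zeta_{i+1}}\Bigr)F(\zeta).
\]

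Now I would apply Lemma \ref{vv} repeatedly to transport $V^{+}(\zeta_{i+1})$ to the far right, past $V^{+}(\zeta_{i+2}),\dots,V^{+}(\zeta_{l+1})$: moving it past $V^{+}(\zeta_k)$ consumes the factor $\frac{1}{1-\zeta_k/\zeta_{i+1}}$ and produces $\frac{1}{1-\zeta_{i+1}/\zeta_k}$, so after all $l-i$ moves the prefactor has become $\prod_{m\ne i+1}\frac{1}{1-\zeta_{i+1}/\zeta_m}$ and the operator string is $V^{+}(\zeta_1)\cdots V^{+}(\zeta_i)V^{+}(\zeta_{i+2})\cdots V^{+}(\zeta_{l+1})V^{+}(\zeta_{i+1})$. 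Renaming $w=\zeta_{i+1}$ and $(z_1,\dots,z_l)=(\zeta_1,\dots,\zeta_i,\zeta_{i+2},\dots,\zeta_{l+1})$, the exponent vector $\gamma$ turns into ``$z$-degrees equal to $\mu$, $w$-degree equal to $r$'', and the right side of the Proposition becomes exactly the coefficient displayed at the outset; by the first paragraph this equals $S_\mu({\bf x})h_r({\bf x})$, which is what we want.

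The computation itself is not hard; the genuine care goes into two bookkeeping points, and I would expect the second to be the real obstacle. First, the correct meaning of $\prod\frac{1}{1-R}\cdot S_\gamma$ — equivalently, the passage to the generating series $F(\zeta)=\sum_\alpha S_\alpha\zeta^\alpha$, rather than any attempt to rearrange the $r$-entry into slot $i+1$ via Proposition \ref{al1221} directly, which is awkward because that relation is a \emph{twisted} transposition, not a clean swap. Second, and more delicate, the consistency of the geometric-series expansions appearing in the merging identity, in each $\frac{1}{1-R_{ab}}$, and in Lemma \ref{vv}: one must check that all of them are expansions in the same monomials, since it is exactly this that makes the chain of rewritings in the previous paragraph legitimate. (As a byproduct the same argument shows the identity holds for every $i\in\{0,\dots,l\}$ with $\mu\cup(r)$ read as $(\mu_1,\dots,\mu_i,r,\mu_{i+1},\dots,\mu_l)$; the hypothesis $\mu_{i+1}<r\le\mu_i$ serves only to make that vector the genuine partition $\mu\cup(r)$.)
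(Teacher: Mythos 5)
Your proof is correct and follows essentially the same route as the paper's: both sides are identified with the coefficient of $z_1^{\mu_1}\cdots z_l^{\mu_l}w^r$ in a product of vertex operators with rational prefactors, using the merging identity and Lemma \ref{vv}, and the geometric-series expansion of each $\frac{1}{1-R_{ab}}$ matches the expansion of the corresponding $\frac{1}{1-w/z_j}$ or $\frac{1}{1-z_j/w}$. The only cosmetic difference is that the paper inserts $V^{+}(w)$ directly into slot $i+1$ of the operator string, whereas you keep it at the far right and commute $V^{+}(\zeta_{i+1})$ outward; your added remarks on the expansion conventions and on not simplifying intermediate $S_\beta$'s make explicit what the paper leaves implicit.
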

\begin{proof}
From equation (\ref{slambda}) and lemma \ref{vv},
\begin{eqnarray*}
S_\mu({\bf x}) h_r({\bf x})&=&[z_1^{\mu_1}\cdots z_l^{\mu_l}]V^+(z_1)\cdots V^+(z_l)\cdot 1 \ [w^r]V^+(w)\cdot 1\\
&=&[z_1^{\mu_1}\cdots z_i^{\mu_i}w^r z_{i+1}^{\mu_{i+1}}\cdots z_l^{\mu_l}]\prod_{j=1}^i\frac{1}{1-w/z_j}\prod_{j=i+1}^l\frac{1}{1-z_j/w}\\
&&V^+(z_1)\cdots V^+(z_i)V^+(w) V^+( z_{i+1})\cdots V^+(z_l)\cdot 1\\
&=&\prod_{j=1}^i\frac{1}{1-R_{j(i+1)}}\prod_{k=i+2}^{l+1}\frac{1}{1-R_{(i+1)k}}\cdot S_{\mu\cup(r)}({\bf x}).
\end{eqnarray*}
\end{proof}
\begin{eg} Let $\mu=(2,1)$ and $r=2$,
\begin{eqnarray*}
S_{(2,1)}({\bf x})h_2({\bf x})&=&\frac{1}{1-R_{12}}\frac{1}{1-R_{23}}\cdot S_{(2,2,1)}({\bf x})\\
&=& (1+R_{12}+R_{12}^2+R_{12}^3+R_{12}^4)(1+R_{23})\cdot S_{(2,2,1)}({\bf x})\\
&=&S_{(2,2,1)}({\bf x})+S_{(3,1,1)}({\bf x})+S_{(4,0,1)}({\bf x})+S_{(5,-1,1)}({\bf x})+S_{(6,-2,1)}({\bf x})\\
&&S_{(2,3)}({\bf x})+S_{(3,2)}({\bf x})+S_{(4,1)}({\bf x}+S_{(5)}({\bf x})+S_{(6,-1)}({\bf x})\\
&=& S_{(2,2,1)}({\bf x})+S_{(3,1,1)}({\bf x})+S_{(3,2)}({\bf x})+S_{(4,1)}({\bf x})
\end{eqnarray*}
where the second equation holds since $(\mu\cup(r))_2=2$ and $(\mu\cup(r))_3=1$, the last equation holds by using the results in lemma \ref{al12} and proposition \ref{al1221}, i.e., by lemma \ref{al12}, $S_{(6,-2,1)}({\bf x})=0$, by
proposition \ref{al1221}, $S_{(4,0,1)}({\bf x})=0,\ S_{(5,-1,1)}({\bf x})=-S_{(5)}({\bf x}),\ S_{(2,3)}({\bf x})=0,\ S_{(6,-1)}({\bf x})=0$.\end{eg}

For a pair of Young diagrams $\mu$ and $\nu$, the product $S_\mu({\bf x})S_\nu({\bf x})$ is an integral linear combination of Schur functions
\be
S_\mu({\bf x})S_\nu({\bf x})=\sum_\lambda c_{\mu\nu}^\lambda S_\lambda({\bf x}).
\ee
The Littlewood-Richardson rule tells $c_{\mu\nu}^\lambda$ is equal to the number of tableaux $T$ of shape $\lambda-\mu$ and weight $\nu$ such that $\omega (T)$ is a lattice permutation. If interested, ones can find a detailed explanation of the Littlewood-Richardson rule in the book \cite{Mac}. In the following, we will give the algebraic form of the decomposition formula for $S_\mu({\bf x})S_\nu({\bf x})$.

\begin{prp}
Let $\mu=(\mu_1,\mu_2,\cdots,\mu_l)$ and $\nu=(\nu_1,\nu_2,\cdots,\nu_{l'})$. Suppose $\mu_{i_k}\geq \nu_k>\mu_{i_k+1}$ for $k=1,2,\cdots,l'$ with $\mu_0=+\infty,\ \mu_{l+1}=0$. Denote the set $\{i_1+1,i_2+2,\cdots, i_{l'}+l'\}$ by $I$, then the following formula holds
\be
S_\mu({\bf x})S_\nu({\bf x})=\prod_{k=1}^{l'}\prod_{j=1,j\not\in I}^{i_k+k-1}\frac{1}{1-R_{j(i_k+k)}}\prod_{j=i_k+k+1,j\not\in I}^{l+l'}\frac{1}{1-R_{j(i_k+k)}}\cdot S_{\mu\cup\nu}({\bf x}).
\ee
\end{prp}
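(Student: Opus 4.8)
The plan is to run the argument used above for the Pieri-type formula (the case $\nu=(r)$), now with a partition $\nu$ having $l'$ rows. First, by the definition \eqref{slambda},
\[
S_\mu({\bf x})S_\nu({\bf x})=[z_1^{\mu_1}\cdots z_l^{\mu_l}w_1^{\nu_1}\cdots w_{l'}^{\nu_{l'}}]\ V^+(z_1)\cdots V^+(z_l)\cdot 1\ \times\ V^+(w_1)\cdots V^+(w_{l'})\cdot 1,
\]
and I would insert the product identity recorded in Section~\ref{sec1},
\[
V^+(z_1)\cdots V^+(z_l)\cdot 1\ \times\ V^+(w_1)\cdots V^+(w_{l'})\cdot 1=\prod_{i=1}^{l}\prod_{k=1}^{l'}\frac{1}{1-w_k/z_i}\ V^+(z_1)\cdots V^+(z_l)V^+(w_1)\cdots V^+(w_{l'})\cdot 1,
\]
which follows from the normal-ordering computation in the proof of Lemma~\ref{al12} together with Lemma~\ref{vv}. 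Hence $S_\mu({\bf x})S_\nu({\bf x})$ is the coefficient of $z_1^{\mu_1}\cdots w_{l'}^{\nu_{l'}}$ in this scalar prefactor times a single string of $l+l'$ vertex operators applied to $1$.

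Next I would reorder that string so that its arguments become interleaved exactly as the parts of $\mu\cup\nu$. Because $\mu$ is a partition and $\nu_1\ge\cdots\ge\nu_{l'}$, the indices $i_k$ determined by $\mu_{i_k}\ge\nu_k>\mu_{i_k+1}$ satisfy $i_1\le i_2\le\cdots\le i_{l'}$, so $i_1+1<i_2+2<\cdots<i_{l'}+l'$, and $\nu_k$ occupies slot $i_k+k$ of $\mu\cup\nu$; equivalently, removing the slots of $I$ from $\{1,\dots,l+l'\}$ and filling the remaining slots left to right with $z_1,\dots,z_l$ recovers the correct interleaving. Processing $k=1,2,\dots,l'$ in turn and using Lemma~\ref{vv} in the form $\frac{1}{1-w_k/z_i}V^+(z_i)V^+(w_k)=\frac{1}{1-z_i/w_k}V^+(w_k)V^+(z_i)$, I would slide $V^+(w_k)$ to the left past $V^+(z_{i_k+1}),\dots,V^+(z_l)$; the hypothesis $i_1\le\cdots\le i_{l'}$ guarantees that $V^+(w_k)$ never needs to cross another $V^+(w_{k'})$. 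Each elementary slide converts one prefactor $\frac{1}{1-w_k/z_i}$ into $\frac{1}{1-z_i/w_k}$, so after all slides the surviving prefactor is
\[
\prod_{k=1}^{l'}\Bigl(\ \prod_{i=1}^{i_k}\frac{1}{1-w_k/z_i}\ \prod_{i=i_k+1}^{l}\frac{1}{1-z_i/w_k}\ \Bigr),
\]
multiplying $V^+(a_1)\cdots V^+(a_{l+l'})\cdot 1$, where $(a_1,\dots,a_{l+l'})$ is the reordered argument list, with $a_{i_k+k}=w_k$ and the remaining slots occupied by $z_1,\dots,z_l$ in order.

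Finally I would take the coefficient of $z_1^{\mu_1}\cdots w_{l'}^{\nu_{l'}}$. Writing $V^+(a_1)\cdots V^+(a_{l+l'})\cdot 1=\sum_\gamma S_\gamma({\bf x})\,a_1^{\gamma_1}\cdots a_{l+l'}^{\gamma_{l+l'}}$ and observing that the exponent vector of $\mu\cup\nu$ carries $\mu_i$ in the slot of $z_i$ and $\nu_k$ in the slot of $w_k$, I would expand each geometric prefactor $\frac{1}{1-w_k/z_i}=\sum_{s\ge0}(w_k/z_i)^s$ (resp.\ $\frac{1}{1-z_i/w_k}$) and match monomials: such a factor acts on $S_{\mu\cup\nu}({\bf x})$ as the corresponding geometric series in the operator $R$ that transfers one box between the slot $i_k+k$ of $w_k$ and the slot of $z_i$, exactly as in the Pieri case. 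For fixed $k$, the slots of $z_1,\dots,z_{i_k}$ are precisely $\{1,\dots,i_k+k-1\}\setminus I$ (one uses here $\#(I\cap\{1,\dots,i_k+k-1\})=k-1$), and the slots of $z_{i_k+1},\dots,z_l$ are precisely $\{i_k+k+1,\dots,l+l'\}\setminus I$; substituting these index sets into the resulting product of $R$-operators reproduces the stated formula.

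Conceptually there is nothing beyond the $\nu=(r)$ case already established; the main obstacle is the bookkeeping in the last two paragraphs — tracking, for each $w_k$, which slot each $z_i$ lands in after the reordering, checking that the slides of $V^+(w_k)$ avoid every $V^+(w_{k'})$, and verifying the counting identities that match $\{\,j\notin I:\ 1\le j\le i_k+k-1\,\}$ with the slots of $z_1,\dots,z_{i_k}$. It is also convenient to agree that $R_{ab}\cdot\alpha=(\dots,\alpha_a+1,\dots,\alpha_b-1,\dots)$ for all $a\neq b$, so that all the symbols produced by the slides are meaningful.
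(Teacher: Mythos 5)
Your argument is correct and is essentially the paper's own proof (which is just the terse three-line coefficient computation you describe), with the interleaving, the sliding of each $V^+(w_k)$ into slot $i_k+k$, and the slot bookkeeping spelled out explicitly. One caveat: the convention you adopt at the end ($R_{ab}$ adds a box to slot $a$ and removes one from slot $b$ for all $a\neq b$) makes the symbols $R_{j(i_k+k)}$ with $j>i_k+k$ in the second product transfer the box in the wrong direction --- the factor $\frac{1}{1-z_j/w_k}$ raises the exponent of $w_k$, so the corresponding operator must add to slot $i_k+k$ and subtract from slot $j$, i.e.\ it is $R_{(i_k+k)j}$ in the paper's original ($a<b$) notation, which is also how the paper's Pieri-type proposition and its worked example (the factors $R_{23}$, $R_{14}$, $R_{34}$ for $\mu=\nu=(2,1)$) are written.
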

\begin{proof}
From equation (\ref{slambda}) and lemma \ref{vv},
\begin{eqnarray*}
&&S_\mu({\bf x}) S_\nu({\bf x})=[{\bf z}^\lambda]V^+(z_1)\cdots V^+(z_l)\cdot 1 \ [{\bf w}^\mu]V^+(w_1)\cdots V^+(w_{l'})\cdot 1\\
&=&[{\bf z}^\lambda{\bf w}^\mu]\prod_{k=1}^{l'}\prod_{j=1}^{i_k}\frac{1}{1-w_{i_k+k}/z_j}\prod_{j=i_k+1}^{l}\frac{1}{1-z_{j}/w_{i_k+k}}
V^+(z_1)\cdots V^+(z_{i_1})\\&&V^+(w_1) V^+( z_{i_1+1})\cdots V^+(z_{i_{l'}})V^+(w_{l'}) V^+( z_{i_{1'+1}})\cdots V^+(z_l)\cdot 1\\
&=&\prod_{k=1}^{l'}\prod_{j=1,j\not\in I}^{i_k+k-1}\frac{1}{1-R_{j(i_k+k)}}\prod_{j=i_k+k+1,j\not\in I}^{l+l'}\frac{1}{1-R_{j(i_k+k)}}\cdot S_{\mu\cup\nu}({\bf x}).
\end{eqnarray*}
\end{proof}
\begin{eg}Let $\mu=\nu=(2,1)$,
\begin{eqnarray*}
S_\mu({\bf x}) S_\nu({\bf x})&=&\frac{1}{1-R_{12}}\frac{1}{1-R_{23}}\frac{1}{1-R_{14}}\frac{1}{1-R_{34}}\cdot S_{(2,2,1,1)}({\bf x})\\
&=& \frac{1}{1-R_{12}}\frac{1}{1-R_{23}}\left(S_{(2,2,1,1)}+S_{(3,2,1)}+S_{(2,2,2)}\right)\\
&=& \frac{1}{1-R_{12}}\left(S_{(2,2,1,1)}+S_{(2,2,2)}+S_{(3,2,1)}+S_{(3,3)}+S_{(2,3,1)}\right)\\
&=&S_{(2,2,1,1)}({\bf x})+S_{(3,3)}({\bf x})+2S_{(3,2,1)}({\bf x})+S_{(2,2,2)}({\bf x})\\
&&+S_{(3,1,1,1)}({\bf x})+S_{(4,1,1)}({\bf x})+S_{(4,2)}({\bf x}).
\end{eqnarray*}
\end{eg}
\section{Hall-Littlewood functions}\label{sect3}
The operators $q_n({\bf x})$ are determined by the generated function\cite{Jing,FW}:
\be\label{qr}
\sum_{n=0}^\infty q_n({\bf x})z^n=e^{\xi_t({\bf x},z)},\quad \xi_t({\bf x},z)=\sum_{n=1}^\infty(1-t^n) x_n z^n
\ee
and set $q_n({\bf x})=0$ for $n<0$. When $t=0$, the operators $q_n({\bf x})$ turn into the complete homogeneous symmetric function $h_n({\bf x})$ if we replace $ix_i$ with the power sum $p_i$.

For a Young diagram $\lambda=(\lambda_1,\lambda_2,\cdots,\lambda_l)$, define\begin{equation}
q_\lambda({\bf x})=q_{\lambda_1}({\bf x})q_{\lambda_2}({\bf x})\cdots q_{\lambda_l}({\bf x})
\end{equation}
and the Hall-Littlewood function $Q_\lambda({\bf x})$ equals
\be
Q_\lambda({\bf x})=\prod_{i<j}\frac{1-R_{ij}}{1-tR_{ij}}q_\lambda({\bf x}).
\ee

The Hall-Littlewood function $P_\lambda({\bf x})$ is a scalar multiple of $Q_\lambda({\bf x})$:
\be
P_\lambda({\bf x})=\frac{1}{b_\lambda(t)}Q_\lambda({\bf x})
\ee
where
\be
b_\lambda(t)=\prod_{i\geq 1}\phi_{m_i(\lambda)}(t).
\ee
Here $m_i(\lambda)$ denotes the number of times $i$ occurs as a part of $\lambda$, and $\phi_r(t)=(1-t)(1-t^2)\cdots(1-t^r)$.

Introduce the following vertex operators
\bea
\Gamma^-(z)=e^{\xi_t({\bf x}, z)},\quad \Gamma^+(z)=e^{\xi(\tilde{\partial}_{\bf x}, z^{-1})},\label{g1}
\eea
where $\xi({\bf x}, z)$ is the special case $t=0$ of $\xi_t({\bf x}, z)$.
Define
\bea
X^\pm(z)&=&\sum_{n\in\Z}X^\pm_nz^n=e^{\pm\xi_t({\bf x}, z)}e^{\mp\xi(\tilde{\partial}_{\bf x}, z^{-1})}.
\eea
The operators $X_i^\pm$ satisfy the following deformed fermionic relations:
\bea
X_{n-1}^\pm X_m^\pm-tX_{n}^\pm X_{m-1}^\pm +X_{m-1}^\pm X_n^\pm-tX_{m}^\pm X_{n-1}^\pm&=&0,\nonumber\\
X_n^+X_{m-1}^--t X_{n-1}^+X_{m}^-+X_m^+X_{n-1}^--t X_{m-1}^+X_{n}^-&=&(1-t)^2\delta_{m+n,1}.\nonumber
\eea
The operators $X_i^+$ are raising operators for the Hall-Littlewood functions\cite{Jing} such that
\begin{equation}\label{qdef}
Q_{\lambda}({\bf x})=X_{\lambda_1}^+X_{\lambda_2}^+\cdots X_{\lambda_l}^+\cdot 1=[z^\lambda]X^+(z_1)X^+(z_2)\cdots X^+(z_l)\cdot 1
\end{equation}
where the Young diagrams $\lambda=(\lambda_1,\lambda_2,\cdots,\lambda_l)$.

\begin{dfn}
For an integer vector $\alpha=(\alpha_1,\alpha_2,\cdots,\alpha_l)$, define the polynomials $Q_\alpha({\bf x})$ by \begin{eqnarray}
Q_\alpha({\bf x})&=&\prod_{i<j}\frac{1-R_{ij}}{1-tR_{ij}}q_{\alpha}({\bf x})\nonumber\\
&=&X_{\alpha_1}^+X_{\alpha_2}^+\cdots X_{\alpha_l}^+\cdot1\nonumber\\
&=&[{\bf z}^\alpha]X^{+}(z_1)X^{+}(z_2)\cdots X^{+}(z_l)\cdot 1\label{qlambda}
\end{eqnarray}
where $[{\bf z}^\alpha]=[z_1^{\alpha_1}z_2^{\alpha_2}\cdots z_l^{\alpha_l}]$ and $q_{\alpha}({\bf x})=q_{\alpha_1}({\bf x})q_{\alpha_2}({\bf x})\cdots q_{\alpha_l}({\bf x})$. We denote $Q_\alpha({\bf x})$ by $Q_\alpha$ for short.
\end{dfn}
\begin{lem}\label{qal12}
The polynomials $Q_\alpha({\bf x})$ defined in (\ref{qlambda}) equal zero unless
\begin{eqnarray*}
\alpha_l&\geq& 0,\\ \alpha_{l-1}+\alpha_l&\geq &0,\\
\alpha_{l-2}+\alpha_{l-1}+\alpha_l&\geq& 0,\\
&\cdots,&\\
\alpha_{1}+\alpha_{2}+\cdots+\alpha_l&\geq& 0.
\end{eqnarray*}
\end{lem}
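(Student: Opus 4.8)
The plan is to read off the vanishing conditions from the same product-expansion trick used in Lemma \ref{al12}, but now with the Hall--Littlewood vertex operators $X^+(z)$. First I would compute the normal-ordered form of the product $X^+(z_1)X^+(z_2)\cdots X^+(z_l)\cdot 1$. Using the commutation relation between $e^{-\xi(\tilde\partial_{\bf x},z^{-1})}$ and $e^{\xi_t({\bf x},w)}$, which introduces a factor of the shape $\frac{1-w/z}{1-tw/z}$ rather than the bare $1-w/z$ appearing in the Schur case, one obtains
\[
X^+(z_1)X^+(z_2)\cdots X^+(z_l)\cdot 1=\prod_{i<j}\frac{1-z_j/z_i}{1-tz_j/z_i}\;e^{\xi_t({\bf x},z_1)}e^{\xi_t({\bf x},z_2)}\cdots e^{\xi_t({\bf x},z_l)}.
\]
Since $Q_\alpha({\bf x})=[{\bf z}^\alpha]$ of the left-hand side, I need to determine for which $\alpha$ this coefficient is forced to vanish.

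Next I would analyze the Laurent expansion of the scalar prefactor $\prod_{i<j}\frac{1-z_j/z_i}{1-tz_j/z_i}$. The key point is that each factor $\frac{1-z_j/z_i}{1-tz_j/z_i}=(1-z_j/z_i)\sum_{m\geq 0}t^m (z_j/z_i)^m$ expands in \emph{nonnegative} powers of $z_j/z_i$, i.e. nonnegative powers of $z_j$ and nonpositive powers of $z_i$, for each pair $i<j$. Meanwhile $e^{\xi_t({\bf x},z_k)}=\sum_{n\geq 0}q_n({\bf x})z_k^n$ contributes only nonnegative powers of each $z_k$. Therefore the total exponent of $z_k$ in any monomial contributing to $Q_\alpha$ is (nonnegative contribution from $e^{\xi_t({\bf x},z_k)}$) plus (nonnegative contributions from the factors with $i=k$, i.e. $k<j$) minus (contributions from the factors with $j=k$, i.e. $i<k$). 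I would package this by the standard ``last variable'' argument: summing the required exponent conditions from $z_l$ upward. Concretely, the power of $z_l$ is nonnegative minus nothing coming from below it only through the pairs $(i,l)$; but the total degree in the pair variables telescopes when one sums $\alpha_l$, then $\alpha_{l-1}+\alpha_l$, etc., because the pairing factors that decrease one index increase another.

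The cleanest route to the stated inequalities is this: set $z_l=z_{l-1}=\cdots=z_{l-r+1}=s$ (equivalently, extract the coefficient of $s^{\alpha_{l-r+1}+\cdots+\alpha_l}$ after this substitution). Under this specialization every factor $\frac{1-z_j/z_i}{1-tz_j/z_i}$ with both $i,j$ among the last $r$ indices becomes a power series in $1$ with no negative powers of $s$ (in fact the numerator $1-z_j/z_i\to 0$, but one checks the combined prefactor still has no pole and no negative $s$-power), and every factor with $i\notin\{l-r+1,\dots,l\}$, $j\in\{l-r+1,\dots,l\}$ contributes only nonnegative powers of $s$, while the exponentials $e^{\xi_t({\bf x},s)}$ contribute nonnegative powers of $s$. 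Hence the lowest power of $s$ occurring is $\geq 0$, which forces $\alpha_{l-r+1}+\alpha_{l-r+2}+\cdots+\alpha_l\geq 0$; letting $r$ run from $1$ to $l$ yields exactly the list in the statement. I expect the main obstacle to be the bookkeeping of the prefactor under the coincidence $z_i\to z_j$: one must verify that $\prod_{i<j}\frac{1-z_j/z_i}{1-tz_j/z_i}$ contributes no negative powers of $s$ after the specialization (the vanishing of numerators is harmless, but one should phrase the argument as extracting the coefficient of a monomial in distinct variables and then using that the coefficient, as a polynomial identity, is supported on the stated cone, rather than literally setting variables equal). I would handle this by arguing directly with the expansion in distinct variables $z_1,\dots,z_l$: each monomial $z_1^{a_1}\cdots z_l^{a_l}$ appearing in the fully expanded right-hand side satisfies $a_{l-r+1}+\cdots+a_l\geq 0$ for every $r$, because the negative contributions to these partial sums cancel against matching positive contributions within the same block of pairing factors, and the exponentials only add nonnegative amounts. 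Taking $a=\alpha$ gives the lemma.
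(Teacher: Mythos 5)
Your proposal is correct and follows the same route as the paper: normal-order the product to get $\prod_{i<j}\frac{1-z_j/z_i}{1-tz_j/z_i}\,e^{\xi_t({\bf x},z_1)}\cdots e^{\xi_t({\bf x},z_l)}$ and read off which exponent vectors can occur; the paper states this in one line, and your final paragraph supplies exactly the telescoping bookkeeping (each pairing monomial $(z_j/z_i)^m$, $m\geq 0$, cancels inside a tail block or adds a nonnegative amount to the tail sum) that the paper leaves implicit. The intermediate detour through specializing $z_{l-r+1}=\cdots=z_l=s$ is unnecessary and, as you yourself note, delicate because of vanishing numerators --- the direct expansion in distinct variables that you end with is the clean and sufficient argument.
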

\begin{proof}Since \be
X^{+}(z_1)X^{+}(z_2)\cdots X^{+}(z_l)\cdot 1=\prod_{i<j}\frac{1-{z_j}/{z_i}}{1-tz_j/z_i}e^{\xi_t({\bf x},z_1)}e^{\xi_t({\bf x},z_2)}\cdots e^{\xi_t({\bf x},z_l)},
\ee
taking the coefficient of $z^\alpha$, we get the result.
\end{proof}
Note that if $\alpha$ is a Young diagram, i.e., $\alpha_1\geq\alpha_2\geq\cdots\geq\alpha_l\geq 0$, the polynomial $Q_\alpha({\bf x})$ is a Hall-Littlewood function corresponding to the Young diagram $\alpha$. If $\alpha$ is not a Young diagram, the polynomial $Q_\alpha({\bf x})$ equals the sum of some Hall-Littlewood functions by the following formula.
\begin{prp}\label{al1221}
For $1\leq i\leq l-1$, the following equation holds
\be\label{alii+1}
Q_{(\alpha_1,\cdots,\alpha_i,\alpha_{i+1},\cdots,\alpha_l)}=\frac{t-R_{i(i+1)}}{1-tR_{i(i+1)}}Q_{(\alpha_1,\cdots,\alpha_{i+1},\alpha_{i},\cdots,\alpha_l)},
\ee
where
\[
R_{i(i+1)}\cdot Q_{(\alpha_1,\cdots,\alpha_{i+1},\alpha_{i},\cdots,\alpha_l)}=Q_{(\alpha_1,\cdots,\alpha_{i+1}-1,\alpha_{i}+1,\cdots,\alpha_l)}.
\]
\end{prp}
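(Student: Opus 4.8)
The plan is to transcribe, in $t$-deformed form, the vertex-operator argument behind the Schur identity (\ref{eqal1221}): there the exchange of two adjacent $V^{+}$'s produced the scalar $-z_{i+1}/z_{i}$ via Lemma \ref{vv}, and here I need the analogous exchange rule for the operators $X^{+}(z)$. So the first step is to establish the commutation relation
\[
X^{+}(z)X^{+}(w)=\frac{tz-w}{z-tw}\,X^{+}(w)X^{+}(z),
\]
which plays the role of (\ref{zwzw}). I would derive it just as Lemma \ref{vv} is derived: the only nontrivial rearrangement is to move $e^{-\xi(\tilde\partial_{\bf x},z^{-1})}$ past $e^{\xi_t({\bf x},w)}$, and from
\[
[\xi(\tilde\partial_{\bf x},z^{-1}),\xi_t({\bf x},w)]=\sum_{n\ge 1}\frac{1-t^{n}}{n}\Big(\frac{w}{z}\Big)^{n}=\log\frac{1-tw/z}{1-w/z}
\]
one gets $e^{-\xi(\tilde\partial_{\bf x},z^{-1})}e^{\xi_t({\bf x},w)}=\frac{1-w/z}{1-tw/z}\,e^{\xi_t({\bf x},w)}e^{-\xi(\tilde\partial_{\bf x},z^{-1})}$, hence $X^{+}(z)X^{+}(w)=\frac{1-w/z}{1-tw/z}:X^{+}(z)X^{+}(w):$ and likewise $X^{+}(w)X^{+}(z)=\frac{1-z/w}{1-tz/w}:X^{+}(z)X^{+}(w):$; dividing one by the other and simplifying gives the displayed relation. (Equivalently, this relation is the generating-function form $(z-tw)X^{+}(z)X^{+}(w)=(tz-w)X^{+}(w)X^{+}(z)$ of the deformed fermionic relations recorded before (\ref{qdef}).)

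Next I would feed this into the product of $l$ vertex operators: transposing the $i$-th and $(i+1)$-th factors gives
\[
X^{+}(z_{1})\cdots X^{+}(z_{i})X^{+}(z_{i+1})\cdots X^{+}(z_{l})\cdot 1=\frac{tz_{i}-z_{i+1}}{z_{i}-tz_{i+1}}\,X^{+}(z_{1})\cdots X^{+}(z_{i+1})X^{+}(z_{i})\cdots X^{+}(z_{l})\cdot 1 .
\]
Then I expand $\dfrac{tz_{i}-z_{i+1}}{z_{i}-tz_{i+1}}=\dfrac{t-z_{i+1}/z_{i}}{1-tz_{i+1}/z_{i}}$ as a formal power series in $z_{i+1}/z_{i}$ and extract the coefficient of $z_{1}^{\alpha_{1}}\cdots z_{l}^{\alpha_{l}}$ from both sides. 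By (\ref{qlambda}) the left side gives $Q_{(\alpha_{1},\dots,\alpha_{i},\alpha_{i+1},\dots,\alpha_{l})}$. On the right side, after the relabelling $z_{i}\leftrightarrow z_{i+1}$ that puts the transposed operator product back into the standard form of (\ref{qlambda}), one sees that multiplication by a factor $z_{i+1}/z_{i}$ decreases by one the $i$-th entry and increases by one the $(i+1)$-th entry of the index of the resulting $Q$, i.e. acts exactly as the shift $R_{i(i+1)}$ of the statement; hence the scalar prefactor turns into the operator $\dfrac{t-R_{i(i+1)}}{1-tR_{i(i+1)}}$ applied to $Q_{(\alpha_{1},\dots,\alpha_{i+1},\alpha_{i},\dots,\alpha_{l})}$, which is (\ref{alii+1}).

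I expect the only delicate point to be the bookkeeping in this last step: one must track the relabelling carefully and confirm that the induced index shift carries precisely the orientation $R_{i(i+1)}\cdot Q_{(\dots,\alpha_{i+1},\alpha_{i},\dots)}=Q_{(\dots,\alpha_{i+1}-1,\alpha_{i}+1,\dots)}$ of the statement, and that the geometric expansion $(1-tz_{i+1}/z_{i})^{-1}=\sum_{k\ge 0}t^{k}(z_{i+1}/z_{i})^{k}$ is the one consistent with the power-series convention implicit in (\ref{qlambda}) (the same convention already used to prove Lemma \ref{qal12} and throughout Section \ref{sect3}). Beyond this there is no new difficulty relative to the Schur case; as a sanity check, setting $t=0$ collapses $\dfrac{t-R_{i(i+1)}}{1-tR_{i(i+1)}}$ to $-R_{i(i+1)}$ and recovers (\ref{eqal1221}).
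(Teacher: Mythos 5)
Your proof is correct and follows essentially the same route as the paper: the paper's proof consists precisely of the exchange identity $X^{+}(z_1)\cdots X^{+}(z_i)X^{+}(z_{i+1})\cdots\cdot 1=\frac{t-z_{i+1}/z_i}{1-tz_{i+1}/z_i}X^{+}(z_1)\cdots X^{+}(z_{i+1})X^{+}(z_{i})\cdots\cdot 1$ followed by taking the coefficient of $z^{\alpha}$. You merely supply the normal-ordering derivation of that exchange relation (which the paper leaves implicit in Lemma \ref{Qqlem}) and the expansion/relabelling bookkeeping, both of which check out.
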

\begin{proof}Since
\begin{eqnarray*}
&&X^{+}(z_1)\cdots X^{+}(z_i)X^+(z_{i+1})\cdots X^{+}(z_l)\cdot 1\\
&=&\frac{t-z_{i+1}/z_i}{1-tz_{i+1}/z_i}X^{+}(z_1)\cdots X^{+}(z_{i+1})X^+(z_{i})\cdots X^{+}(z_l)\cdot 1,\end{eqnarray*}
taking the coefficient of $z^\alpha$, we get the result.\end{proof}

\begin{eg}\label{eg23} We calculate $Q_{(2,3)}({\bf x})$, $Q_{(1,4)}({\bf x})$, $Q_{(0,5)}({\bf x})$, $Q_{(-1,6)}({\bf x})\cdots$.

From (\ref{alii+1}) and lemma \ref{qal12}, we have
\begin{eqnarray*}
Q_{(2,3)}&=&tQ_{(3,2)}+(t^2-1)Q_{(2,3)}+t(t^2-1)Q_{(1,4)}+t^2(t^2-1)Q_{(0,5)}\nonumber\\
&&+t^3(t^2-1)Q_{(-1,6)}+t^4(t^2-1)Q_{(-2,7)}+\cdots\\
Q_{(1,4)}&=&tQ_{(4,1)}+(t^2-1)Q_{(3,2)}+t(t^2-1)Q_{(2,3)}+t^2(t^2-1)Q_{(1,4)}\nonumber\\
&&+t^3(t^2-1)Q_{(0,5)}+t^4(t^2-1)Q_{(-1,6)}+t^5(t^2-1)Q_{(-2,7)}+\cdots\\
Q_{(0,5)}&=&tQ_{(5)}+(t^2-1)Q_{(4,1)}+t(t^2-1)Q_{(3,2)}+t^2(t^2-1)Q_{(2,3)}\nonumber\\
&&+t^3(t^2-1)Q_{(1,4)}+t^4(t^2-1)Q_{(0,5)}+t^5(t^2-1)Q_{(-1,6)}+\cdots\\
Q_{(-1,6)}&=&(t^2-1)Q_{(5)}+t(t^2-1)Q_{(4,1)}+t^2(t^2-1)Q_{(3,2)}+t^3(t^2-1)Q_{(2,3)}\nonumber\\
&&+t^4(t^2-1)Q_{(1,4)}+t^5(t^2-1)Q_{(0,5)}+t^6(t^2-1)Q_{(-1,6)}+\cdots\\
&\cdots&
\end{eqnarray*}
we get
\begin{eqnarray*}
Q_{(1,4)}-tQ_{(2,3)}&=&tQ_{(4,1)}-Q_{(3,2)},\\
Q_{(0,5)}-t^2Q_{(2,3)}&=&t Q_{(5)}+(t^2-1)Q_{(4,1)}-tQ_{(3,2)},\\
Q_{(-1,6)}-t^3Q_{(2,3)}&=&(t^2-1) Q_{(5)}+t(t^2-1)Q_{(4,1)}-t^2Q_{(3,2)},\\
Q_{(-2,7)}=tQ_{(-1,6)},&& Q_{(-3,8)}=t^2 Q_{(-1,6)},\ \ Q_{(-4,9)}=t^3 Q_{(-1,6)},\cdots
\end{eqnarray*}
then
\begin{eqnarray}
Q_{(2,3)}&=&tQ_{(3,2)},\\
Q_{(1,4)}&=&tQ_{(4,1)}+(t^2-1)Q_{(3,2)},\nonumber\\
Q_{(0,5)}&=&tQ_{(5)}+(t^2-1)Q_{(4,1)}+t(t^2-1)Q_{(3,2)},\nonumber\\
Q_{(-1,6)}&=&(t^2-1)Q_{(5)}+t(t^2-1)Q_{(4,1)}+t^2(t^2-1)Q_{(3,2)},\nonumber\\
Q_{(-2,7)}&=& tQ_{(-1,6)},\ \ Q_{(-3,8)}\ =\ t^2 Q_{(-1,6)},\ \ Q_{(-4,9)}\ =\ t^3 Q_{(-1,6)},\cdots.\nonumber
\end{eqnarray}
\end{eg}

\begin{eg}\label{eg24}
We calculate $Q_{(2,4)}({\bf x})$, $Q_{(1,5)}({\bf x})$, $Q_{(0,6)}({\bf x})$, $Q_{(-1,7)}({\bf x})\cdots$.

From (\ref{alii+1}) and lemma \ref{qal12}, we have
\begin{eqnarray*}
Q_{(2,4)}&=&tQ_{(4,2)}+(t^2-1)Q_{(3,3)}+t(t^2-1)Q_{(2,4)}+t^2(t^2-1)Q_{(1,5)}+t^3(t^2-1)Q_{(0,6)}\nonumber\\
&&+t^4(t^2-1)Q_{(-1,7)}+\cdots,\nonumber\\
Q_{(1,5)}&=&tQ_{(5,1)}+(t^2-1)Q_{(4,2)}+t(t^2-1)Q_{(3,3)}+t^2(t^2-1)Q_{(2,4)}+t^3(t^2-1)Q_{(1,5)}\nonumber\\
&&+t^4(t^2-1)Q_{(0,6)}+t^5(t^2-1)Q_{(-1,7)}+\cdots,\nonumber\\
Q_{(0,6)}&=&tQ_{(6)}+(t^2-1)Q_{(5,1)}+t(t^2-1)Q_{(4,2)}+t^2(t^2-1)Q_{(3,3)}+t^3(t^2-1)Q_{(2,4)}\nonumber\\
&&+t^4(t^2-1)Q_{(1,5)}+t^5(t^2-1)Q_{(0,6)}+t^6(t^2-1)Q_{(-1,7)}+\cdots\\
Q_{(-1,7)}&=&(t^2-1)Q_{(6)}+t(t^2-1)Q_{(5,1)}+t^2(t^2-1)Q_{(4,2)}+t^3(t^2-1)Q_{(3,3)}\nonumber\\
&&+t^4(t^2-1)Q_{(2,4)}+t^5(t^2-1)Q_{(1,5)}+t^6(t^2-1)Q_{(0,6)}+t^7(t^2-1)Q_{(-1,7)}+\cdots\\
Q_{(-2,8)}&=&tQ_{(-1,7)},\ \ Q_{(-3,9)}\ =\ t^2Q_{(-1,7)},\ \cdots.
\end{eqnarray*}
we get
\begin{eqnarray*}
Q_{(1,5)}-tQ_{(2,4)}&=&tQ_{(5,1)}-Q_{(4,2)},\\
Q_{(0,6)}-t^2Q_{(2,4)}&=&t Q_{(6)}+(t^2-1)Q_{(5,1)}-tQ_{4,2},\\
Q_{(-1,7)}-t^3Q_{(2,4)}&=&(t^2-1) Q_{(6)}+t(t^2-1)Q_{(5,1)}-t^2Q_{(4,2)},
\end{eqnarray*}
then
\begin{eqnarray*}
Q_{(2,4)}&=&tQ_{(4,2)}+(t-1)Q_{(3,3)},\\
Q_{(1,5)}&=&tQ_{(5,1)}+(t^2-1)Q_{(4,2)}+t(t-1)Q_{(3,3)},\nonumber\\
Q_{(0,6)}&=&tQ_{(6)}+(t^2-1)Q_{(5,1)}+t(t^2-1)Q_{(4,2)}++t^2(t-1)Q_{(3,3)},\nonumber\\
Q_{(-1,7)}&=&(t^2-1)Q_{(6)}+t(t^2-1)Q_{(5,1)}+t^2(t^2-1)Q_{(4,2)}++t^3(t-1)Q_{(3,3)},\nonumber\\
Q_{(-2,8)}&=&tQ_{(-1,7)},\ \ Q_{(-3,9)}\ =\ t^2Q_{(-1,7)},\ \cdots.
\end{eqnarray*}
\end{eg}

For a polynomial $Q_{\alpha}({\bf x})=Q_{(\alpha_1,\cdots,\alpha_i,\alpha_{i+1},\cdots,\alpha_l)}({\bf x})$, we want to change it to the form $Q_{(\alpha_1,\cdots,\alpha_{i+1},\alpha_{i},\cdots,\alpha_l)}({\bf x})$ if $\alpha_i<\alpha_{i+1}$. From (\ref{alii+1}), We get two different formulas based on $\alpha_{i+1}-\alpha_{i}$ equal to an odd or even number.

\begin{prp}\label{prp23}
If $\alpha_{i+1}-\alpha_{i}$ are odd, the polynomials $Q_{(\alpha_1,\cdots,\alpha_i,\alpha_{i+1},\cdots,\alpha_l)}({\bf x})$ are calculated by the following formulas:
\begin{eqnarray*}
Q_{(\cdots, n, n+1,\cdots)}&=&tQ_{(\cdots, n+1, n,\cdots)},\\
Q_{(\cdots, n-1, n+2,\cdots)}&=&tQ_{(\cdots, n+2, n-1,\cdots)}+(t^2-1)Q_{(\cdots, n+1, n,\cdots)},\\
Q_{(\cdots, n-2, n+3,\cdots)}&=&tQ_{(\cdots, n+3, n-2,\cdots)}+(t^2-1)Q_{(\cdots, n+2, n-1,\cdots)}+t(t^2-1)Q_{(\cdots, n+1, n,\cdots)},\\
&\cdots&\\
Q_{(\cdots, n-m, n+1+m,\cdots)}&=&tQ_{(\cdots, n+1+m, n-m,\cdots)}+(t^2-1)Q_{(\cdots, n+m, n-m+1,\cdots)}+\cdots\\
&&+t^{m-1}(t^2-1)Q_{(\cdots, n+1, n,\cdots)},
\end{eqnarray*}
where $n,m\in\Z$ and $m>0$.
\end{prp}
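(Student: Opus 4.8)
\emph{Proof proposal.} The plan is to reduce the whole family of formulas to a single recursion coming from the straightening relation~(\ref{alii+1}), and then to solve that recursion in one stroke. Throughout I suppress the spectator entries and set
\[
A_m:=Q_{(\cdots,\,n-m,\,n+1+m,\,\cdots)},\qquad B_m:=Q_{(\cdots,\,n+1+m,\,n-m,\,\cdots)}\qquad (m\ge 0),
\]
so the $B_m$ are exactly the ``sorted'' terms occurring on the right-hand sides and the assertion to be proved is $A_m=tB_m+(t^2-1)\sum_{j=0}^{m-1}t^{\,m-1-j}B_j$ for all $m\ge 0$ (for $m=0$ this is the first displayed line $A_0=tB_0$). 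Note at the outset that, by Lemma~\ref{qal12}, $B_j=0$ for all large $j$, because the entry $n-j$ eventually drives one of the partial sums of the index vector negative.

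First I would apply~(\ref{alii+1}) with $(\alpha_i,\alpha_{i+1})=(n-m,\,n+1+m)$ and expand $\frac{t-R}{1-tR}=t+(t^2-1)\sum_{k\ge1}t^{k-1}R^k$, where $R:=R_{i(i+1)}$ lowers the first of the two interchanged entries by one and raises the second by one. The $k=0$ term is $tB_m$; for $1\le k\le m$ the entry $(n+1+m-k,\,n-m+k)$ is again sorted and is the index of $B_{m-k}$; and for $k\ge m+1$ that same entry swings back to unsorted form and is the index of $A_{k-m-1}$. Collecting and re-indexing gives the master identity
\[
A_m\;=\;tB_m\;+\;(t^2-1)\sum_{j=0}^{m-1}t^{\,m-1-j}B_j\;+\;(t^2-1)\,t^m\,S,\qquad S:=\sum_{j\ge0}t^{\,j}A_j,
\]
for all $m\ge 0$. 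The infinite series are to be read $t$-adically, in the standard way for raising-operator computations (\cite{Jing,Mac}); this is legitimate since $B_j=0$ for large $j$ and, as the master identity itself shows for large $m$, the $t$-order of $A_j$ grows without bound. Hence the proposition is \emph{equivalent} to the single identity $S=0$.

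To prove $S=0$ I would substitute the master identity into the defining series $S=\sum_{m\ge0}t^mA_m$ and use $\sum_{m\ge0}t^{2m}=(1-t^2)^{-1}$: the contribution of the term $(t^2-1)t^mS$ is $(t^2-1)S\,(1-t^2)^{-1}=-S$; interchanging summation order in the remaining double sum, the coefficient of $B_j$ becomes $(t^2-1)\sum_{m\ge j+1}t^{2m-1-j}=(t^2-1)\,\frac{t^{j+1}}{1-t^2}=-t^{j+1}$, so that double sum equals $-\sum_{j\ge0}t^{j+1}B_j$ and cancels $\sum_{m\ge0}t^{m+1}B_m$. What is left is $S=-S$, so $2S=0$ and $S=0$. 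Plugging $S=0$ back into the master identity yields precisely the claimed formulas — in effect a streamlined version of the computations carried out in Examples~\ref{eg23} and~\ref{eg24}.

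The step needing the most care is the rigorous handling of the operator $R$ and of the rearranged infinite series (that $\frac{1}{1-tR}=\sum_{k\ge0}t^kR^k$ acts on the relevant $Q_\alpha$ and that the double sum may be swapped), but this is entirely controlled by the two $t$-adic finiteness facts above. If one prefers to avoid the bookkeeping with $S$, an equivalent route is to first obtain the three-term recursion $A_m-tA_{m-1}=tB_m-B_{m-1}$ ($m\ge1$) directly from~(\ref{alii+1}) via $A_{m-1}=\frac{t-R}{1-tR}RB_m$ (since $RB_m=B_{m-1}$) together with the telescoping identity $\frac{t-R}{1-tR}(1-tR)=t-R$, then to prove the base case $A_0=tB_0$, i.e.\ $Q_{(\cdots,n,n+1,\cdots)}=tQ_{(\cdots,n+1,n,\cdots)}$ — which is the case of equal indices in the first deformed fermionic relation, giving $X^+_{k-1}X^+_k=tX^+_kX^+_{k-1}$ — and then to induct on $m$.
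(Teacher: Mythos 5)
Your argument is correct, and in substance it is the argument the paper intends but does not write out: the paper's entire proof is the sentence ``these equations can be proved by the method used in Example~\ref{eg23}'', and that method is exactly your telescoping identity $A_m-tA_{m-1}=tB_m-B_{m-1}$ (it appears verbatim in the example as $Q_{(1,4)}-tQ_{(2,3)}=tQ_{(4,1)}-Q_{(3,2)}$), combined with the vanishing supplied by Lemma~\ref{qal12} and back-substitution. What you add is the part the paper omits, namely a uniform derivation valid for all $m$. Your primary route packages the back-substitution as the single statement $S=\sum_{j\geq 0}t^jA_j=0$, proved by a generating-function manipulation; this is a clean closed form of the example's infinite elimination, and the convergence issue is even milder than you suggest, since each $A_j$ lies in $\Z[t][{\bf x}]$ and is homogeneous of fixed degree in ${\bf x}$, so $\sum_j t^jA_j$ converges $t$-adically for trivial reasons (and $2S=0$ forces $S=0$ because the coefficient module is torsion-free). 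Your alternative route --- the three-term recursion obtained from $\frac{t-R}{1-tR}(1-tR)=t-R$, the base case $X^+_{k-1}X^+_k=tX^+_kX^+_{k-1}$ from the deformed fermionic relation with equal indices, and induction on $m$ --- is the shortest fully rigorous version and is essentially the paper's computation made general. One small caution: you read $R_{i(i+1)}$ with the convention stated locally in~(\ref{alii+1}) (decrease the $i$-th slot by one, increase the $(i{+}1)$-st by one), which is the inverse of the paper's global definition of $R_{ij}$; the paper is inconsistent on this point, and your usage is the one under which~(\ref{alii+1}) and Example~\ref{eg23} are true, so no change is needed, but it is worth flagging.
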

\begin{proof}
These equations can be proved by the method used in example \ref{eg23}.
\end{proof}

\begin{prp}\label{prp24}
If $\alpha_{i+1}-\alpha_{i}$ are even, the polynomials $Q_{(\alpha_1,\cdots,\alpha_i,\alpha_{i+1},\cdots,\alpha_l)}({\bf x})$ are calculated by the following formulas:
\begin{eqnarray*}
Q_{(\cdots, n, n+2,\cdots)}&=&tQ_{(\cdots, n+2, n,\cdots)}+(t-1)Q_{(\cdots, n+1, n+1,\cdots)},\\
Q_{(\cdots, n-1, n+3,\cdots)}&=&tQ_{(\cdots, n+3, n-1,\cdots)}+(t^2-1)Q_{(\cdots, n+2, n,\cdots)}+t(t-1)Q_{(\cdots, n+1, n+1,\cdots)},\\
Q_{(\cdots, n-2, n+4,\cdots)}&=&tQ_{(\cdots, n+4, n-2,\cdots)}+(t^2-1)Q_{(\cdots, n+3, n-1,\cdots)}+t(t^2-1)Q_{(\cdots, n+2, n,\cdots)}\\
&&+t^2(t-1)Q_{(\cdots, n+1, n+1,\cdots)},\\
&\cdots&\\
Q_{(\cdots, n-m, n+2+m,\cdots)}&=&tQ_{(\cdots, n+2+m, n-m,\cdots)}+(t^2-1)Q_{(\cdots, n+1+m, n-m+1,\cdots)}+\cdots\\
&&+t^{m-1}(t^2-1)Q_{(\cdots, n+2, n,\cdots)}+t^m(t-1)Q_{(\cdots, n+1, n+1,\cdots)},
\end{eqnarray*}
where $n,m\in\Z$ and $m>0$.
\end{prp}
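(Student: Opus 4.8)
The plan is to carry out the computation of Example \ref{eg24} in a way that isolates the one genuinely nontrivial ingredient — an infinite ``tail'' common to all the relations — and disposes of it once. Fix the two entries in positions $i,i+1$, keep every other entry frozen, and for $p\ge 0$ abbreviate $L_p:=Q_{(\cdots, n-p, n+2+p,\cdots)}$, $S_p:=Q_{(\cdots, n+2+p, n-p,\cdots)}$ (the same pair sorted into weakly decreasing order), and $D:=Q_{(\cdots, n+1,n+1,\cdots)}$. The identity to be proved is
\[
L_m=tS_m+(t^2-1)\sum_{j=0}^{m-1}t^{m-1-j}S_j+t^m(t-1)\,D,\qquad m\ge 0 .
\]

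First I would plug the pair $(n-p,n+2+p)$ into equation (\ref{alii+1}) and expand $\tfrac{1}{1-tR_{i(i+1)}}=\sum_{k\ge 0}t^kR_{i(i+1)}^k$. Tracking which vector the $k$-th summand produces — the sorted pairs $S_{p-1},\dots,S_0$ for $1\le k\le p$, then the diagonal $D$ for $k=p+1$, then the unsorted pairs $L_0,L_1,L_2,\dots$ for $k\ge p+2$ — one obtains, for every $p\ge 0$,
\[
L_p=tS_p+(t^2-1)\sum_{j=0}^{p-1}t^{p-1-j}S_j+t^{p}(t^2-1)\,D+t^{p+1}(t^2-1)\,T ,\qquad T:=\sum_{q\ge 0}t^qL_q .
\]
The symbol $T$ may be read either as a genuine finite sum (when $i+1<l$, Lemma \ref{qal12} annihilates $S_q$ and $L_q$ for large $q$) or merely as a power series in $t$ with coefficients in the finite-dimensional space of symmetric functions of the relevant degree (when $i+1=l$, as in the examples); in either reading only the relations satisfied by $T$ are ever used, never a ``value''.

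Next I would observe that the $D$-term and the $T$-term enter the $p$-th relation with exactly one extra factor $t$ compared with the $(p-1)$-st, so forming the difference of the $p$-th relation and $t$ times the $(p-1)$-st cancels both of them simultaneously and leaves the finite three-term recursion $L_m-tL_{m-1}=tS_m-S_{m-1}$, valid for $m\ge 1$. To start the recursion I would feed the equal-entry vector $(\cdots,n+1,n+1,\cdots)$ into (\ref{alii+1}); since $R_{i(i+1)}^k$ then runs through $L_0,L_1,\dots$, this produces the auxiliary identity $(1-t)D=(t^2-1)T$, whence $t(t^2-1)T=(t-t^2)D$, and substituting this into the $p=0$ relation collapses it to $L_0=tS_0+(t-1)D$, with no inversion of $t^2-1$ at any stage. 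Finally, iterating $L_m=tL_{m-1}+tS_m-S_{m-1}$ down to this base case and regrouping powers of $t$ yields the asserted formula. (Proposition \ref{prp23} is proved along identical lines; the only change is that, the entries differing by an odd number, the chain of pairs never meets a diagonal, $D$ drops out, and the auxiliary identity is the one obtained by applying (\ref{alii+1}) to $(\cdots,n+1,n,\cdots)$.)

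The delicate point — the reason this is more than bookkeeping — is the treatment of $T$: one must check that the infinite sum occurring in every one of these expansions is literally the \emph{same} quantity $\sum_{q\ge 0}t^qL_q$, up to the explicit power of $t$, so that the cancellation in ``$p$-th minus $t$ times $(p-1)$-st'' is legitimate and eliminates $T$ altogether, and so that the argument never leaves the ring of $\Z[t]$-linear combinations of Hall-Littlewood functions. Granting this, the remaining manipulations are precisely the index juggling already displayed in Examples \ref{eg23} and \ref{eg24}.
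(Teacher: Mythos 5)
Your proof is correct and follows the same route as the paper, whose entire proof is the pointer ``these equations can be proved by the method used in Example \ref{eg24}'': you write the relation (\ref{alii+1}) for each pair $(n-p,n+2+p)$, cancel the common infinite tail by taking the $p$-th relation minus $t$ times the $(p-1)$-st, and solve the resulting finite recursion. In fact you supply the one step the example leaves implicit, namely how to pin down the base case without summing the tail $T$: your auxiliary identity $(1-t)D=(t^2-1)T$, obtained by feeding the diagonal vector into (\ref{alii+1}), is a clean way to do this and avoids any division by $1-t^2$. One parenthetical claim is wrong but harmless: Lemma \ref{qal12} does kill $S_q$ for large $q$ (its suffix sum starting at position $i+1$ contains $n-q$), but it never kills $L_q$, since every suffix sum of $L_q$ either omits the pair, contains both entries (hence is constant in $q$), or contains only $n+2+q$; so $T$ is genuinely infinite and must be read, as you also allow, as a formal power series in $t$, under which reading your cancellation argument goes through unchanged.
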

\begin{proof}
These equations can be proved by the method used in example \ref{eg24}.
\end{proof}

Note that the formulas in propositions \ref{prp23} and \ref{prp24} turn into the equation (\ref{eqal1221}) when $t=0$.

\begin{lem}\label{Qqlem} The following equation holds
\be
\frac{1-tw/z}{1-w/z}X^+(z)X^+(w)=\frac{1-tz/w}{1-z/w}X^+(w)X^+(z).
\ee
\end{lem}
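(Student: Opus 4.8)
The plan is to mimic the proof of Lemma~\ref{vv}: show that, up to the rational prefactors displayed in the statement, both sides of the identity equal the normal--ordered product $:X^+(z)X^+(w):$, which is manifestly symmetric in $z$ and $w$.

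First I would record the one contraction that drives everything. Writing $X^+(z)=e^{\xi_t({\bf x},z)}e^{-\xi(\tilde\partial_{\bf x},z^{-1})}$, the only non--commuting pair when forming the product $X^+(z)X^+(w)$ is the annihilation exponential $e^{-\xi(\tilde\partial_{\bf x},z^{-1})}$ of the first factor against the creation exponential $e^{\xi_t({\bf x},w)}$ of the second. Putting $A=-\sum_{n\ge1}\frac1n\frac{\partial}{\partial x_n}z^{-n}$ and $B=\sum_{m\ge1}(1-t^m)x_m w^m$, the commutator is the scalar $[A,B]=-\sum_{n\ge1}\frac1n(1-t^n)(w/z)^n=\log\frac{1-w/z}{1-tw/z}$, so $e^Ae^B=e^{[A,B]}e^Be^A$ gives
\[
e^{-\xi(\tilde\partial_{\bf x},z^{-1})}e^{\xi_t({\bf x},w)}=\frac{1-w/z}{1-tw/z}\,e^{\xi_t({\bf x},w)}e^{-\xi(\tilde\partial_{\bf x},z^{-1})},
\]
all understood as a formal expansion in $w/z$. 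Substituting this into $X^+(z)X^+(w)$ and pushing every creation exponential to the left and every annihilation exponential to the right yields $\frac{1-tw/z}{1-w/z}X^+(z)X^+(w)=\,:X^+(z)X^+(w):$.

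The right--hand side is symmetric under $z\leftrightarrow w$, since the two creation exponentials commute with one another and so do the two annihilation exponentials. Hence exchanging the roles of $z$ and $w$ in the previous displayed identity gives $\frac{1-tz/w}{1-z/w}X^+(w)X^+(z)=\,:X^+(w)X^+(z):\,=\,:X^+(z)X^+(w):$, and comparing the two expressions proves the lemma. The only point requiring care is bookkeeping: confirming that the contraction factor is exactly $\frac{1-w/z}{1-tw/z}$ rather than its reciprocal or a $t$--shifted variant, and fixing that rational functions are expanded in $|w|<|z|$ on the left and $|z|<|w|$ on the right so that the statement is an honest identity of formal Laurent series compatible with the raising--operator expansion~(\ref{qdef}) used elsewhere. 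Everything else is the same normal--ordering manipulation already carried out for $V^+$.
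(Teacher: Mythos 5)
Your proposal is correct and is exactly the argument the paper intends: the paper states Lemma~\ref{Qqlem} without proof, but its proof of the analogous Lemma~\ref{vv} for $V^+$ proceeds by the identical normal-ordering computation, and your contraction factor $\frac{1-w/z}{1-tw/z}$ (from $[A,B]=\log\frac{1-w/z}{1-tw/z}$) is the right $t$-deformation of the $(1-w/z)$ appearing there. Your remarks on the expansion domains $|w|<|z|$ versus $|z|<|w|$ are an appropriate, if implicit in the paper, clarification.
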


For a Young diagram $\mu$,
\be\label{qQ}
q_r({\bf x}) Q_\mu({\bf x})=\sum_\lambda \psi_{\lambda/\mu}(t) Q_\lambda({\bf x})
\ee
summed over all $\lambda\supset\mu$ such that $\lambda-\mu$ is a horizontal $r$-strip\cite{Mac}. Here
\be\label{psi11}
\psi_{\lambda/\mu}(t)=\prod_{j\in J_{\lambda/\mu}}(1-t^{m_j(\mu)})
\ee
where $\theta=\lambda-\mu$ is a horizontal strip and $J_{\lambda/\mu}$ is the set of integers $j\geq 1$ such that $\theta'_j<\theta'_{j+1}$.
In the following, we will give the algebraic form of formula (\ref{qQ}).
\begin{prp}
Let Young diagram $\mu=(\mu_1,\cdots,\mu_l)$ and $\mu_{i+1}< r\leq\mu_i$ for $0\leq i\leq l$ with $\mu_{l+1}=0,\ \mu_{0}=+\infty$,
\be
Q_\mu({\bf x}) q_r({\bf x})=\prod_{j=1}^i\frac{1-tR_{j(i+1)}}{1-R_{j(i+1)}}\prod_{k=i+2}^{l+1}\frac{1-tR_{(i+1)k}}{1-R_{(i+1)k}}\cdot Q_{\mu\cup(r)}({\bf x}).
\ee
\end{prp}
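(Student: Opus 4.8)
The plan is to follow verbatim the argument already used above for $S_\mu({\bf x})h_r({\bf x})$, with the bosonic vertex operator $V^+$ replaced by its Hall--Littlewood deformation $X^+$ and the Cauchy kernel $\frac{1}{1-w/z}$ replaced by the deformed kernel $\frac{1-tw/z}{1-w/z}$. First I would observe that $q_r({\bf x})$ is produced by a single vertex operator acting on the vacuum: the annihilation part of $X^+(w)$ kills $1$, so by (\ref{qr})
\[
X^+(w)\cdot 1=e^{\xi_t({\bf x},w)}=\sum_{n\geq 0}q_n({\bf x})w^n,\qquad\text{hence}\qquad q_r({\bf x})=[w^r]X^+(w)\cdot 1 .
\]
Combined with (\ref{qdef}) this gives $Q_\mu({\bf x})q_r({\bf x})=[z_1^{\mu_1}\cdots z_l^{\mu_l}](X^+(z_1)\cdots X^+(z_l)\cdot 1)\cdot[w^r](X^+(w)\cdot 1)$.

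Next I would multiply the two vacuum vectors. Both $X^+(z_1)\cdots X^+(z_l)\cdot 1$ and $X^+(w)\cdot 1$ are polynomials in ${\bf x}$ alone, so using the creation-operator expression from the proof of Lemma \ref{qal12} (extended to $l+1$ arguments) their product is
\[
\prod_{1\leq a<b\leq l}\frac{1-z_b/z_a}{1-tz_b/z_a}\,e^{\xi_t({\bf x},z_1)}\cdots e^{\xi_t({\bf x},z_l)}e^{\xi_t({\bf x},w)},
\]
whereas $X^+(z_1)\cdots X^+(z_l)X^+(w)\cdot 1$ equals the same expression times $\prod_{j=1}^{l}\frac{1-w/z_j}{1-tw/z_j}$. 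Dividing, I obtain
\[
(X^+(z_1)\cdots X^+(z_l)\cdot 1)(X^+(w)\cdot 1)=\prod_{j=1}^{l}\frac{1-tw/z_j}{1-w/z_j}\,X^+(z_1)\cdots X^+(z_l)X^+(w)\cdot 1 ,
\]
exactly as in the Schur case but with the deformed kernel.

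Then I would use Lemma \ref{Qqlem} to move $X^+(w)$ leftward until it sits between $X^+(z_i)$ and $X^+(z_{i+1})$, the slot dictated by the hypothesis $\mu_{i+1}<r\leq\mu_i$ so that $\mu\cup(r)=(\mu_1,\cdots,\mu_i,r,\mu_{i+1},\cdots,\mu_l)$. Each transposition of $X^+(w)$ past an $X^+(z_j)$ with $j\geq i+1$ converts the accompanying factor $\frac{1-tw/z_j}{1-w/z_j}$ into $\frac{1-tz_j/w}{1-z_j/w}$ by Lemma \ref{Qqlem}, so after the moves
\[
Q_\mu q_r=[z_1^{\mu_1}\cdots z_i^{\mu_i}w^{r}z_{i+1}^{\mu_{i+1}}\cdots z_l^{\mu_l}]\prod_{j=1}^{i}\frac{1-tw/z_j}{1-w/z_j}\prod_{j=i+1}^{l}\frac{1-tz_j/w}{1-z_j/w}\,X^+(z_1)\cdots X^+(z_i)X^+(w)X^+(z_{i+1})\cdots X^+(z_l)\cdot 1 .
\]
Finally I would translate the coefficient extraction into raising operators: regarding $z_1,\cdots,z_i,w,z_{i+1},\cdots,z_l$ as occupying slots $1,\cdots,i,i+1,i+2,\cdots,l+1$, the bare coefficient of this monomial is $Q_{\mu\cup(r)}$; expanding each rational factor geometrically, a term $(w/z_j)^n$ with $j\leq i$ forces the slot-$j$ exponent up by $n$ and the slot-$(i+1)$ exponent down by $n$, i.e.\ it acts as $R_{j(i+1)}^n$, while a term $(z_j/w)^n$ with $j\geq i+1$ acts as $R_{(i+1)k}^n$ with $k=j+1$. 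Hence $\frac{1-tw/z_j}{1-w/z_j}$ becomes $\frac{1-tR_{j(i+1)}}{1-R_{j(i+1)}}$ and $\frac{1-tz_j/w}{1-z_j/w}$ becomes $\frac{1-tR_{(i+1)k}}{1-R_{(i+1)k}}$, and as $j$ runs over $i+1,\cdots,l$ the index $k$ runs over $i+2,\cdots,l+1$, which is the asserted identity.

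The \emph{only} delicate point I anticipate is the index bookkeeping in this last step: keeping track of which slot $w$ occupies after the insertion, matching the expansion directions (in $w/z_j$ for $j\leq i$, in $z_j/w$ for $j>i$) with the operator order produced by Lemma \ref{Qqlem}, and verifying that the resulting exponent shifts are precisely $R_{j(i+1)}$ and $R_{(i+1)k}$. All the analytic content sits in Lemma \ref{Qqlem} together with the normal-ordering identity behind Lemma \ref{qal12}; everything else is formal manipulation parallel to the Schur computation already carried out.
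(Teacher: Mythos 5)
Your proposal is correct and follows essentially the same route as the paper's own proof: write $Q_\mu$ and $q_r$ as coefficient extractions from vertex operator products, merge them via the deformed kernel $\prod_j\frac{1-tw/z_j}{1-w/z_j}$, commute $X^+(w)$ into the slot between $X^+(z_i)$ and $X^+(z_{i+1})$ using Lemma \ref{Qqlem}, and reinterpret the kernels as the raising operators $R_{j(i+1)}$ and $R_{(i+1)k}$. The paper's version is just a terser rendition of the same computation (and in fact contains a typo, writing $V^+$ where $X^+$ is meant); your expanded bookkeeping of the exponent shifts is accurate.
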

\begin{proof}
From equation (\ref{qdef}) and lemma \ref{Qqlem},
\begin{eqnarray*}
Q_\mu({\bf x}) q_r({\bf x})&=&[z_1^{\mu_1}\cdots z_l^{\mu_l}]X^+(z_1)\cdots X^+(z_l)\cdot 1 \ [w^r]X^+(w)\cdot 1\\
&=&[z_1^{\mu_1}\cdots z_i^{\mu_i}w^r z_{i+1}^{\mu_{i+1}}\cdots z_l^{\mu_l}]\prod_{j=1}^i\frac{1-tw/z_j}{1-w/z_j}\prod_{j=i+1}^l\frac{1-tz_j/w}{1-z_j/w}\\
&&V^+(z_1)\cdots V^+(z_i)V^+(w) V^+( z_{i+1})\cdots V^+(z_l)\cdot 1\\
&=&\prod_{j=1}^i\frac{1-tR_{j(i+1)}}{1-R_{j(i+1)}}\prod_{k=i+2}^{l+1}\frac{1-tR_{(i+1)k}}{1-R_{(i+1)k}}\cdot Q_{\mu\cup(r)}({\bf x}).
\end{eqnarray*}
\end{proof}
\begin{eg} Let $\mu=(2,1)$ and $r=2$,
\begin{eqnarray*}
&&Q_{(2,1)}({\bf x})q_2({\bf x})=\frac{1-tR_{12}}{1-R_{12}}\frac{1-tR_{23}}{1-R_{23}}\cdot Q_{(2,2,1)}({\bf x})\\
&=& (1+(1-t)R_{12}+(1-t)R_{12}^2+(1-t)R_{12}^3+(1-t)R_{12}^4)(1+(1-t)R_{23})\cdot Q_{(2,2,1)}({\bf x})\\
&=& Q_{(2,2,1)}({\bf x})+(1-t)Q_{(3,1,1)}({\bf x})+(1-t)Q_{(4,0,1)}({\bf x})+(1-t)Q_{(5,-1,1)}({\bf x})\\
&&+(1-t)Q_{(6,-2,1)}+(1-t)Q_{(2,3)}({\bf x})+(1-t)^2({\bf x})Q_{(3,2)}({\bf x})+(1-t)^2Q_{(4,1)}({\bf x})\\
&&+(1-t)^2Q_{(5)}({\bf x})+(1-t)^2Q_{(6,-1)}({\bf x})\\
&=&Q_{(2,2,1)}({\bf x})+(1-t)Q_{(3,1,1)}({\bf x})+(1-t)Q_{(3,2)}({\bf x})+(1-t)Q_{(4,1)}({\bf x}).\end{eqnarray*}
where the second equation holds since $(\mu\cup(r))_2=2$ and $(\mu\cup(r))_3=1$ and the last equation hold since
\begin{eqnarray*}
Q_{(5,-1,1)}=(t-1)Q_{(5)},\ \ Q_{(4,0,1)}=tQ_{(4,1)},\ \ Q_{(2,3)}=tQ_{(3,2)}\ \ Q_{(6,-2,1)}=0
\end{eqnarray*}
by using the formulas in propositions \ref{prp23} and \ref{prp24} and
\[
Q_{(6,-1)}=0
\]
by lemma \ref{qal12}.
\end{eg}

In the following, we give the algebraic form of the decomposition formula for $Q_{\mu}({\bf x})Q_{\nu}({\bf x})$.
\begin{prp}
Let $\mu=(\mu_1,\mu_2,\cdots,\mu_l)$ and $\nu=(\nu_1,\nu_2,\cdots,\nu_{l'})$. Suppose $\mu_{i_k}\geq \nu_k>\mu_{i_k+1}$ for $k=1,2,\cdots,l'$ with $\mu_0=+\infty,\ \mu_{l+1}=0$. Denote the set $\{i_1+1,i_2+2,\cdots, i_{l'}+l'\}$ by $I$, then the following formula holds
\be
Q_\mu({\bf x})Q_\nu({\bf x})=\prod_{k=1}^{l'}\prod_{j=1,j\not\in I}^{i_k+k-1}\frac{1-tR_{j(i_k+k)}}{1-R_{j(i_k+k)}}\prod_{j=i_k+k+1,j\not\in I}^{l+l'}\frac{1-tR_{j(i_k+k)}}{1-R_{j(i_k+k)}}\cdot Q_{\mu\cup\nu}({\bf x}).
\ee
\end{prp}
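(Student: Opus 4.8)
The plan is to run, nearly verbatim, the argument used above for the product $S_\mu({\bf x})S_\nu({\bf x})$ of two Schur functions, with the vertex operators $V^\pm$ replaced by $X^\pm$ and Lemma~\ref{vv} replaced by Lemma~\ref{Qqlem}: every occurrence of the scalar $1/(1-w/z)$ in that computation is replaced by $(1-tw/z)/(1-w/z)$, a substitution that propagates unchanged through the whole argument and recovers the Schur statement at $t=0$. First I would use (\ref{qdef}) to write $Q_\mu({\bf x})=[{\bf z}^\mu]X^+(z_1)\cdots X^+(z_l)\cdot 1$ and $Q_\nu({\bf x})=[{\bf w}^\nu]X^+(w_1)\cdots X^+(w_{l'})\cdot 1$, so that $Q_\mu({\bf x})Q_\nu({\bf x})$ is the coefficient of ${\bf z}^\mu{\bf w}^\nu$ in the product of these two series; by the normal-ordering identity from the proof of Lemma~\ref{qal12}, that product equals $\prod_{i,j}\frac{1-tw_j/z_i}{1-w_j/z_i}$ times $X^+(z_1)\cdots X^+(z_l)X^+(w_1)\cdots X^+(w_{l'})\cdot 1$.

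Next I would apply Lemma~\ref{Qqlem} repeatedly to interleave the $w$'s among the $z$'s, pushing $X^+(w_k)$ to the left until it sits immediately after $X^+(z_{i_k})$. Each elementary transposition of $X^+(z_j)$ past $X^+(w_k)$ for $j>i_k$ turns the factor $\frac{1-tw_k/z_j}{1-w_k/z_j}$ into $\frac{1-tz_j/w_k}{1-z_j/w_k}$, so that the accumulated prefactor becomes $\prod_{k=1}^{l'}\big(\prod_{j=1}^{i_k}\frac{1-tw_k/z_j}{1-w_k/z_j}\big)\big(\prod_{j=i_k+1}^{l}\frac{1-tz_j/w_k}{1-z_j/w_k}\big)$, times the interleaved operator product $X^+(z_1)\cdots X^+(z_{i_1})X^+(w_1)X^+(z_{i_1+1})\cdots X^+(z_{i_{l'}})X^+(w_{l'})\cdots X^+(z_l)\cdot 1$. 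Since $\mu_{i_k}\geq\nu_k>\mu_{i_k+1}$ and $i_1\leq\cdots\leq i_{l'}$, the interleaved sequence of exponents of the $z$'s and $w$'s is weakly decreasing, i.e.\ it is the partition $\mu\cup\nu$, so taking $[{\bf z}^\mu{\bf w}^\nu]$ of the bare interleaved product yields $Q_{\mu\cup\nu}({\bf x})$.

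Finally I would take the coefficient of ${\bf z}^\mu{\bf w}^\nu$. In the interleaved product $X^+(w_k)$ occupies position $i_k+k$, so $I=\{i_1+1,\dots,i_{l'}+l'\}$ is precisely the set of positions of the $w$'s, the positions of $z_1,\dots,z_{i_k}$ are $\{1,\dots,i_k+k-1\}\setminus I$, and those of $z_{i_k+1},\dots,z_l$ are $\{i_k+k+1,\dots,l+l'\}\setminus I$; expanding $\frac{1-tw_k/z_j}{1-w_k/z_j}=1+(1-t)\sum_{m\geq1}(w_k/z_j)^m$ (and symmetrically for $\frac{1-tz_j/w_k}{1-z_j/w_k}$) and converting each monomial shift of the variables into the corresponding power of the operator $R$ relating the position of $z_j$ to the position $i_k+k$ of $w_k$ turns each scalar factor into $\frac{1-tR_{j(i_k+k)}}{1-R_{j(i_k+k)}}$, with $j$ now running over the two position sets just described; this gives the asserted identity. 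I expect the only real obstacle to be exactly this last bookkeeping---checking that $X^+(w_k)$ lands at position $i_k+k$, that the $z$-positions fill the complement of $I$, and that the two index ranges have $i_k$ and $l-i_k$ elements respectively---but it is word-for-word the same verification already needed in the Schur case, and it uses only the monotonicity $i_1\leq\cdots\leq i_{l'}$, so nothing genuinely new is required.
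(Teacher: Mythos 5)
Your proposal is correct and follows essentially the same route as the paper's own proof: express $Q_\mu$ and $Q_\nu$ as coefficients of products of $X^+$ operators via (\ref{qdef}), interleave the $w$'s among the $z$'s using the commutation relation of Lemma \ref{Qqlem}, and convert the accumulated prefactors $\frac{1-tw_k/z_j}{1-w_k/z_j}$ into the raising operators $\frac{1-tR_{j(i_k+k)}}{1-R_{j(i_k+k)}}$ acting on $Q_{\mu\cup\nu}$. In fact your write-up supplies the positional bookkeeping that the paper leaves implicit (and it correctly cites (\ref{qdef}) and Lemma \ref{Qqlem}, where the paper's proof mistakenly points to the Schur-case equation and lemma).
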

\begin{proof}
From equation (\ref{slambda}) and lemma \ref{vv},
\begin{eqnarray*}
&&Q_\mu({\bf x}) Q_\nu({\bf x})=[{\bf z}^\lambda]X^+(z_1)\cdots X^+(z_l)\cdot 1 \ [{\bf w}^\mu]X^+(w_1)\cdots X^+(w_{l'})\cdot 1\\
&=&[{\bf z}^\lambda{\bf w}^\mu]\prod_{k=1}^{l'}\prod_{j=1}^{i_k}\frac{1-tw_{i_k+k}/z_j}{1-w_{i_k+k}/z_j}\prod_{j=i_k+1}^{l}\frac{1-tz_{j}/w_{i_k+k}}{1-z_{j}/w_{i_k+k}}
X^+(z_1)\cdots X^+(z_{i_1})\\&&X^+(w_1) X^+( z_{i_1+1})\cdots X^+(z_{i_{l'}})X^+(w_{l'}) X^+( z_{i_{1'+1}})\cdots X^+(z_l)\cdot 1\\
&=&\prod_{k=1}^{l'}\prod_{j=1,j\not\in I}^{i_k+k-1}\frac{1-tR_{j(i_k+k)}}{1-R_{j(i_k+k)}}\prod_{j=i_k+k+1,j\not\in I}^{l+l'}\frac{1-tR_{j(i_k+k)}}{1-R_{j(i_k+k)}}\cdot Q_{\mu\cup\nu}({\bf x}).
\end{eqnarray*}
\end{proof}
\begin{eg}Let $\mu=\nu=(2,1)$,
\begin{eqnarray*}
&&Q_\mu({\bf x}) Q_\nu({\bf x})=\frac{1-tR_{12}}{1-R_{12}}\frac{1-tR_{23}}{1-R_{23}}\frac{1-tR_{14}}{1-R_{14}}\frac{1-tR_{34}}{1-R_{34}}\cdot Q_{(2,2,1,1)}({\bf x})\\
&=& \frac{1-tR_{12}}{1-R_{12}}\frac{1-tR_{23}}{1-R_{23}}(1+(1-t)R_{14})(1+(1-t)R_{34})\cdot Q_{(2,2,1,1)}({\bf x})\\
&=& \frac{1-tR_{12}}{1-R_{12}}\frac{1-tR_{23}}{1-R_{23}}\left(Q_{(2,2,1,1)}+(1-t)Q_{(3,2,1)}+(1-t)Q_{(2,2,2)}\right)\\
&=& \frac{1-tR_{12}}{1-R_{12}}\left(Q_{(2,2,1,1)}+(1-t)Q_{(3,2,1)}+(1-t)Q_{(2,2,2)}+(1-t)^2Q_{(3,3)}+(1-t)Q_{(2,3,1)}\right)\\
&=&Q_{(2,2,1,1)}({\bf x})+(1-t)^2Q_{(3,3)}({\bf x})+(1-t^2)(2-t)Q_{(3,2,1)}({\bf x})+(1-t)Q_{(2,2,2)}({\bf x})\\
&&+(1-t)Q_{(3,1,1,1)}({\bf x})+(1-t)Q_{(4,1,1)}({\bf x})+(1-t)^2Q_{(4,2)}({\bf x}).
\end{eqnarray*}
\end{eg}
\section{Universal characters}\label{sect4}
Let ${\bf x}=(x_1,x_2,\cdots)$ and ${\bf y}=(y_1,y_2,\cdots)$. For Young diagrams $\lambda=(\lambda_1,\lambda_2,\cdots,\lambda_l)$ and $\mu=(\mu_1,\mu_2,\cdots,\mu_{l'})$, the universal character $S_{[\lambda,\mu]}=S_{[\lambda,\mu]}({\bf x},{\bf y})$ is a polynomial of variables ${\bf x}$ and ${\bf y}$ in $\C[{\bf x},{\bf y}]$ defined by the twisted Jacobi-Trudi formula \cite{KK}:
\be
S_{[\lambda,\mu]}({\bf x},{\bf y})=\text{det}\left( \begin{array}{cc}
h_{\mu_{l'-i+1}+i-j}({\bf y}), & 1\leq i\leq l' \\
h_{\lambda_{i-l'}-i+j}({\bf x}), & l'+1\leq i\leq l+l'
\end{array} \right)_{1\leq i,j\leq l+l'}.
\ee
Define the degree of each variables $x_n, y_n, \ n=1,2,\cdots$ by
\[
\text{deg }x_n=n,\quad \text{deg }y_n=-n
\]
then $S_{[\lambda,\mu]}({\bf x},{\bf y})$ is a homogeneous polynomial of degree $|\lambda|-|\mu|$, where $|\lambda|=\lambda_1+\lambda_2+\cdots+\lambda_l$ is called the weight of $\lambda$. Note that $S_\lambda({\bf x})$ is a special case of the universal character: $S_\lambda({\bf x})=\det(h_{\lambda_i-i+j}({\bf x}))=S_{[\lambda,\emptyset]}({\bf x},{\bf y})$.

Introduce the following vertex operators
\bea
&&\Gamma_1^-(z)=e^{\xi({\bf x}-\tilde{\partial}_{\bf y}, z)},\quad \Gamma_1^+(z)=e^{\xi(\tilde{\partial}_{\bf x}, z^{-1})},\label{g1}\\
&&\Gamma_2^-(z)=e^{\xi({\bf y}-\tilde{\partial}_{\bf x}, z)},\quad \Gamma_2^+(z)=e^{\xi(\tilde{\partial}_{\bf y}, z^{-1})}.\label{g2}
\eea
Define
\bea
X^\pm(z)&=&\sum_{n\in\Z}X^\pm_nz^n=e^{\pm\xi({\bf x}-\tilde{\partial}_{\bf y}, z)}e^{\mp\xi(\tilde{\partial}_{\bf x}, z^{-1})},\\
Y^\pm(z^{-1})&=&\sum_{n\in\Z}Y^\pm_nz^{-n}=e^{\pm\xi({\bf y}-\tilde{\partial}_{\bf x}, z^{-1})}e^{\mp\xi(\tilde{\partial}_{\bf y}, z)}.
\eea
The operators $X_i^\pm$ satisfy the following Fermionic relations:
\bea
X_i^\pm X_j^\pm +X_{j-1}^\pm X_{i+1}^{\pm}&=&0,\nonumber\\
X_i^+X_j^-+X_{j+1}^{-}X_{i-1}^+&=&\delta_{i+j,0}.\nonumber
\eea
The same relations hold also for $Y_i^\pm$, and $X_i^\pm$ and $Y_i^\pm$ are commutative. The operators $X_i^+$ and $Y_i^+$ are raising operators for the universal characters such that
\begin{eqnarray}
S_{[\lambda,\mu]}({\bf x},{\bf y})&=&X_{\lambda_1}^+\cdots X_{\lambda_l}^+Y_{\mu_1}^+\cdots X_{\mu_{l'}}^+\cdot 1\nonumber\\
&=&[z^\lambda w^\mu] X^+(z_1)\cdots X^+(z_l)Y^+(w_1)\cdots Y^+(w_{l'})\cdot 1\label{universal}
\end{eqnarray}
where the Young diagrams $\lambda=(\lambda_1,\lambda_2,\cdots,\lambda_l)$ and $\mu=(\mu_1,\mu_2,\cdots,\mu_{l'})$. It turns out that
\be\label{slambdamu}
S_{[\lambda,\mu]}({\bf x},{\bf y})=S_\lambda({\bf x}-\tilde{\partial}_{\bf y})S_\mu({\bf y}-\tilde{\partial}_{\bf x})\cdot 1
\ee
where $S_\lambda({\bf x})$ is a Schur function.

From \cite{KK}, we know that
\be\label{ssUC}
S_{[\xi,\eta]}({\bf x},{\bf y})S_{[\tau,\mu]}({\bf x},{\bf y})=\sum_{\lambda,\mu} M_{[\xi,\eta],[\tau,\nu]}^{[\lambda,\mu]}S_{[\lambda,\mu]}({\bf x},{\bf y}),
\ee
where\be
 M_{[\xi,\eta],[\tau,\nu]}^{[\lambda,\mu]}=\sum_{\alpha,\beta,\theta,\delta}(\sum_\kappa C_{\kappa\alpha}^\xi C_{\kappa\beta}^\nu)(\sum_\epsilon C_{\epsilon\theta}^\eta C_{\epsilon\delta}^\tau) C_{\alpha\delta}^\lambda C_{\beta\theta}^\mu
\ee
and $C_{\mu\nu}^\lambda$ is the Littlewood-Richardson coefficient of $S_\lambda({\bf x})$ in the expansion of $S_\mu({\bf x})S_\nu({\bf x})$.
In the following, we will give the algebraic form of the formula (\ref{ssUC}).

For $\lambda=(\lambda_1,\lambda_2,\cdots,\lambda_l)$, define the operator $D_{\lambda_i}$ by $D_{\lambda_i}\cdot \lambda=(\lambda_1,\cdots,\lambda_{i-1},\lambda_i-1,\lambda_{i+1},\cdots,\lambda_{l})$, and $D_{\lambda_i\mu_j}=D_{\lambda_i}D_{\mu_j}$.
\begin{prp}For Young diagrams $\xi,\ \eta,\ \tau,\ \nu$,
\be
S_{[\xi,\eta]}({\bf x},{\bf y})S_{[\tau,\nu]}({\bf x},{\bf y})=\prod_{ij}\frac{1}{1-D_{\xi_i\nu_j}}\prod_{mn}\frac{1}{1-D_{\tau_m\eta_n}}S_{[\xi\cdot\tau,\eta\cdot\nu]}({\bf x},{\bf y})
\ee
where the multiplication $\mu\cdot\nu=\sum_{\lambda}C_{\mu\nu}^\lambda \lambda$ satisfies the Littlewood-Richardson rule.
\end{prp}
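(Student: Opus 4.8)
The plan is to rerun the vertex-operator computation used for Schur functions in Section~\ref{sect2}, now carrying the two families of raising operators $X^+$ and $Y^+$ at the same time; the genuinely new feature is that the two factors interact through \emph{four} families of contractions rather than one. First I would use (\ref{universal}) to write, letting $l,l',m,m'$ be the numbers of parts of $\xi,\eta,\tau,\nu$,
\[
S_{[\xi,\eta]}=[z^\xi w^\eta]\,X^+(z_1)\cdots X^+(z_l)Y^+(w_1)\cdots Y^+(w_{l'})\cdot 1,
\]
\[
S_{[\tau,\nu]}=[u^\tau v^\nu]\,X^+(u_1)\cdots X^+(u_m)Y^+(v_1)\cdots Y^+(v_{m'})\cdot 1,
\]
and normal-order each expression by pushing all annihilation parts onto the vacuum. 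Alongside the contractions already present in Section~\ref{sect2}, the new one here is a ``cross'' factor $\prod_{i,j}(1-z_i/w_j)$ between the $X^+$'s and the $Y^+$'s, produced by the $\tilde{\partial}_{\bf y}$ inside $X^+(z_i)$ meeting the ${\bf y}$-creation inside $Y^+(w_j)$; the $\tilde{\partial}_{\bf x}$ parts of the $Y^+$'s act trivially on $1$.

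Next I would multiply the two resulting polynomials --- whose creation exponentials simply add --- and compare with the normal-ordered form of
\[
\Omega:=X^+(z_1)\cdots X^+(z_l)X^+(u_1)\cdots X^+(u_m)Y^+(w_1)\cdots Y^+(w_{l'})Y^+(v_1)\cdots Y^+(v_{m'})
\]
applied to $1$, which is allowed because the $X_i^\pm$ commute with the $Y_j^\pm$. All internal contractions cancel between the two sides, and precisely the four inter-factor families survive, so that
\[
S_{[\xi,\eta]}S_{[\tau,\nu]}=[z^\xi w^\eta u^\tau v^\nu]\;\frac{\Omega\cdot 1}{\prod_{i,a}(1-u_a/z_i)\ \prod_{j,b}(1-w_j/v_b)\ \prod_{i,b}(1-z_i/v_b)\ \prod_{a,j}(1-u_a/w_j)},
\]
where $i,j$ index the $z$'s and $w$'s of the first factor and $a,b$ the $u$'s and $v$'s of the second.

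I would then split these four denominator families into two groups. The two ``mixed'' families $\prod_{i,b}(1-z_i/v_b)^{-1}$ and $\prod_{a,j}(1-u_a/w_j)^{-1}$, expanded geometrically and carried through the coefficient extraction, become $\prod_{i,b}\tfrac{1}{1-D_{\xi_i\nu_b}}$ and $\prod_{a,j}\tfrac{1}{1-D_{\tau_a\eta_j}}$, i.e.\ exactly the two products in the statement, by the very mechanism that turned $\tfrac{1}{1-w/z}$ into $\tfrac{1}{1-R_{\cdot\cdot}}$ in Sections~\ref{sect2}--\ref{sect3}; here the $D$'s act on the index slots of $\xi,\eta,\tau,\nu$ before the Littlewood--Richardson products are expanded. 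The two remaining ``like-type'' families $\prod_{i,a}(1-u_a/z_i)^{-1}$ and $\prod_{j,b}(1-w_j/v_b)^{-1}$, taken together with the coefficient extractions over $(z,u)$ and over $(w,v)$ respectively, should collapse the concatenated $X$-block into $S_{\xi\cdot\tau}$ and the concatenated $Y$-block into $S_{\eta\cdot\nu}$: this is precisely the statement of the Proposition on the product $S_\mu S_\nu$ in Section~\ref{sect2}, applied to $X^+$ (resp.\ $Y^+$), which satisfy the reordering identity $\tfrac{1}{1-u/z}X^+(z)X^+(u)=\tfrac{1}{1-z/u}X^+(u)X^+(z)$ (a consequence of the Fermionic relations), and which can be run block by block since $X$'s commute with $Y$'s. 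Assembling, the leftover equals $\sum_{\lambda,\mu}C_{\xi\tau}^\lambda C_{\eta\nu}^\mu S_{[\lambda,\mu]}=S_{[\xi\cdot\tau,\eta\cdot\nu]}$, which is the claimed formula; as in Section~\ref{sect2} one reads $S_{[\alpha,\beta]}$ through the same vertex-operator expression when $\alpha,\beta$ fail to be partitions.

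The step I expect to be the real obstacle is this middle bookkeeping: one must verify that exactly the four inter-factor families survive --- no more, no fewer --- and then correctly sort out which of them is absorbed into the Littlewood--Richardson straightening of the combined $X$- and $Y$-blocks (the like-type pairs $X$--$X$ and $Y$--$Y$) and which survives as the shift operators $D$ (the mixed pairs $X$--$Y$). A secondary delicate point is to make precise the sense in which $D_{\xi_i}$, $D_{\nu_b}$, $D_{\tau_a}$, $D_{\eta_j}$ act on $S_{[\xi\cdot\tau,\eta\cdot\nu]}$ --- namely on the individual factors of the two Littlewood--Richardson products, before those products are expanded --- in a way consistent with the order in which the coefficient extractions are performed.
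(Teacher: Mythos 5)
Your proposal is correct and follows essentially the same route the paper intends: the paper's own proof is a one-line appeal to the vertex-operator realization (\ref{universal}) together with (\ref{slambdamu}), and your normal-ordering computation --- internal contractions cancelling, exactly four inter-factor families surviving, the two like-type families being absorbed into the Littlewood--Richardson straightening of the $X$- and $Y$-blocks and the two mixed families becoming the shift operators $D$ --- is precisely the calculation the authors leave implicit (and the one they do carry out explicitly in the Schur and Hall--Littlewood cases). The only point to tidy is the variable convention for the $Y^+$'s: since $Y^{+}$ is defined as a series in $z^{-1}$, the mixed contraction reads $(1-z_iv_b)$ if one extracts the coefficient of $v_b^{\nu_b}$ (or $(1-z_i/v_b)$ if one extracts $v_b^{-\nu_b}$), and either choice lowers both indices simultaneously, giving $D_{\xi_i\nu_b}$ as you claim.
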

\begin{proof}
This can be proved by using the formula (\ref{slambdamu}) and the vertex operator realization of universal character (\ref{universal}).
\end{proof}
\begin{eg}
Let $\xi=(2,1),\ \eta=(3,1),\ \tau=(1),\ \nu=(1)$,
\begin{eqnarray*}
&&S_{[\xi,\eta]}({\bf x},{\bf y})S_{[\tau,\nu]}({\bf x},{\bf y})=\prod_{i}\frac{1}{1-D_{\xi_i\nu}}\prod_{n}\frac{1}{1-D_{\tau\eta_n}}S_{[\xi\cdot\tau,\eta\cdot\nu]}({\bf x},{\bf y})\\
&=&(1+D_{\xi_1\nu})(1+D_{\xi_2\nu})(1+D_{\tau\eta_1})(1+D_{\tau\eta_2})S_{[(2,1)\cdot(1),(3,1)\cdot(1)]}({\bf x},{\bf y})\\
&=& (1+D_{\xi_1\nu}+D_{\xi_2\nu}+D_{\tau\eta_1}+D_{\tau\eta_2}+D_{\xi_1\nu}D_{\tau\eta_1}+D_{\xi_2\nu}D_{\tau\eta_1}+D_{\xi_1\nu}D_{\tau\eta_2}\\
&&+D_{\xi_2\nu}D_{\tau\eta_2})S_{[(2,1)\cdot(1),(3,1)\cdot(1)]}({\bf x},{\bf y})\\
&=&S_{[(2,1)\cdot(1),(3,1)\cdot(1)]}+S_{[(1,1)\cdot(1),(3,1)]}+S_{[(2)\cdot(1),(3,1)]}+S_{[(2,1),(2,1)\cdot(1)]}\\
&&+S_{[(2,1),(3)\cdot(1)]}+S_{[(1,1),(2,1)]}+S_{[(2),(2,1)]}+S_{[(1,1),(3)]}+S_{[(2),(3)]}\\
&=&S_{[(3,1),(4,1)]}+S_{[(3,1),(3,2)]}+S_{[(3,1),(3,1,1)]}+S_{[(2,2),(4,1)]}+S_{[(2,2),(3,2)]}+S_{[(2,1),(3,1,1)]}\\
&&+S_{[(2,1,1),(4,1)]}+S_{[(2,1,1),(3,2)]}+S_{[(2,1,1),(3,1,1)]}+S_{[(2,1),(3,1)]}+S_{[(1,1,1),(3,1)]}\\
&&S_{[(3),(3,1)]}+S_{[(2,1),(3,1)]}+S_{[(2,1),(3,1)]}+S_{[(2,1),(2,2)]}+S_{[(2,1),(2,1,1)]}+S_{[(2,1),(4)]}\\
&&+S_{[(2,1),(3,1)]}+S_{[(1,1),(2,1)]}+S_{[(2),(2,1)]}+S_{[(1,1),(3)]}+S_{[(2),(3)]},
\end{eqnarray*}
where we use $S_{[\lambda,\mu]}$ to denote $S_{[\lambda,\mu]}({\bf x},{\bf y})$.
\end{eg}

\section*{Discussion}
From the formulas we constructed, it is easy to find that the coefficients of $S_\lambda({\bf x})$ in the expansion of $S_\mu({\bf x})S_\nu({\bf x})$ and the coefficients of $S_{[\lambda,\mu]}({\bf x},{\bf y})$ in the expansion of $S_{[\xi,\eta]}({\bf x},{\bf y})S_{[\tau,\mu]}({\bf x},{\bf y})$ are integers, and the coefficients of $Q_\lambda({\bf x})$ in the expansion of $Q_\mu({\bf x})Q_\nu({\bf x})$ are in $\Z[t]$. We can use this method to prove some structure constants are integers.
\section*{Acknowledgements}
The authors gratefully acknowledge the support of Professors Shi-Kun Wang, Zi-Feng Yang.
Ke Wu is supported by the National Natural Science Foundation
of China under Grant No. 11475116.


\begin{thebibliography}{100}

\bibitem{Mac}
I. G. Macdonald, {\it Symmetric functions and Hall polynomials}. Oxford Mathematical
Monographs, Clarendon Press, Oxford, 1979.

\bibitem{FH}
W. Fulton and J. Harris, {\it Representation theory, A first course}. Springer-Verlag, New York, 1991.

\bibitem{stan} R. P. Stanley, Enumerative Combinatorics, Volume II. Cambridge University Press, Cambridge, 1999.

\bibitem{weyl}H. Weyl, \emph{The classical groups; their invariants and representations}. Princeton Univ. Press, Princeton, 1946.

\bibitem{MJD}
 T. Miwa, M. Jimbo, and E. Date, \emph{ Solitons: Differential equations, symmetries and infinite dimensional algebras}. Cambridge University Press, Cambridge, 2000.

\bibitem{NVT}
N. Tsilevich,
\emph{Quantum inverse scattering method for the q-boson model and symmetric functions}.
Funct. Anal. Appl. 40, No. 3 (2006) 207-217, \textsf{math-ph/0510073}.

\bibitem{PS} P. Su\l kowski, \emph{Deformed boson-fermion correspondence, Q-bosons, and topological strings on the conifold}. JHEP 0810 (2008) 104, \textsf{ hep-th/0808.2327}.


\bibitem{wang1} N. Wang, C. Z. Li, \emph {Universal Character, Phase Model and Topological Strings on $\C^3$}. math-ph/1711.05548.

\bibitem{na1} N. Wang, K. Wu, \emph{Vertex operators, $t$-boson model and weighted plane partitions in finite boxes}. accepted by Modern Physics Letters B.

\bibitem{Jing} N. Jing, {\it Vertex operators and Hall-Littlewood symmetric functions}. Adv. Math. 87 (1991) 226-248.

\bibitem{frenkel}I. B. Frenkel, J. Lepowsky, and A. Meurman, \emph{Vertex operator algebras and Monster}. Academic Press, New York, 1988.

\bibitem{FW} O. Foda, M. Wheeler, {\it Hall-Littlewood plane partitions and KP}. 	Int. Math. Res. Not. (2009) 2597-2619, math-ph/0809.2138.



\bibitem{KK} K. Koike, \emph{On the decomposition of tensor products of the representations of the classical groups: By means of the universal characters}. Adv. Math. 74 (1989) 57-86.





















\end{thebibliography}
\end{document}